\newtheorem{asm}{Assumption}
\newtheorem{dfn}{Definition}
\newtheorem*{dfn*}{Definition}
\newtheorem{lem}{Lemma}
\newtheorem{thm}{Theorem}
\newtheorem*{thm*}{Theorem}
\newtheorem{clm}{Claim}
\begin{document}
\title{Social Learning with Questions}
\author{
Grant Schoenebeck, Univeristy of Michigan, \textit{schoeneb@umich.edu} \\
\and Shih-Tang Su, Univeristy of Michigan, \textit{shihtang@umich.edu} \\
\and Vijay Subramanian, Univeristy of Michigan, \textit{vgsubram@umich.edu}}
%\date{Oct. 2019}
\clearpage
\maketitle
\thispagestyle{empty}

\begin{abstract}
This work studies sequential social learning (also known as Bayesian observational learning), and how private communication can enable agents to avoid herding to the wrong action/state. Starting from the seminal BHW\footnote{Bikhchandani, Hirshleifer, and Welch, 1992 \cite{BHW1992}} model where asymptotic learning does not occur, we allow agents to ask private and finite questions to a bounded subset of their predecessors. While retaining the publicly observed history of the agents and their Bayes rationality from the BHW model, we further assume that both the ability to ask questions and the questions themselves are common knowledge. Then interpreting asking questions as partitioning information sets, we study whether asymptotic learning can be achieved with finite capacity questions. Restricting our attention to the network where every agent is only allowed to query her immediate predecessor, an explicit construction shows that a 1-bit question from each agent is enough to enable asymptotic learning.  

\end{abstract}
\clearpage
\pagenumbering{arabic}
\newpage

\section{Introduction} \label{sec:intro}
In social networks, agents use information from (a) private signals (knowledge) they have, (b) learning past agents actions (observations), and (c) comments from contactable experienced agents (experts) to guide their own decisions. The study of learning using private signals and past agents' actions, i.e., private or public history, in homogeneous and Bayes-rational agents was initiated by the seminal work in \cite{BHW1992,welch1992,banerjee1992}. Key results in \cite{BHW1992,welch1992,banerjee1992} state that in the model where countably infinite Bayes-rational agents make decisions sequentially to match binary unknown states of the world, named social learning or Bayesian observational learning in the literature \cite{smith2000,sorensen1996rational,hann2018speed,acemoglu2017fast,
acemoglu2011bayesian,lee1993convergence,gul1995endogenous,
vives1997learning,huck1998informational}, an outcome called \textit{Information Cascade} occurs almost surely with fully observable history and bounded likelihood ratio of signals. An information cascade occurs when it is optimal for agents to ignore their own (private) signals for decision making after observing the history. Though individually optimal, this may lead to a socially sub-optimal outcome where after an information cascade, all agents ignore their private signal and choose the wrong action, referred to as a “wrong cascade.” In models of social learning, there is another possible outcome known as \textit{(asymptotic) learning}\footnote{See Definition \ref{def:AL} in Section \ref{sec:Problem}.}. This occurs when the information in the private signals is aggregated efficiently so that agents eventually learn the underlying true state the world and make socially optimal decisions.

%\noindent\textbf{Problem Description} 	
%	As foreshadowed in the first sentence, besides these three themes, using information from contactable past agents may reveal another channel to achieve asymptotic learning. 
Literature studying social learning, i.e., making socially optimal decisions, is mainly focussed on using channels (a) and (b) above for Bayes-rational agents; works using  channel (c) mostly study learning by modeling it as a communication channel in distributed (sequential) hypothesis testing problems, but with non-Bayes-rational agents. Inspired by behaviors in social networks, where people usually query their friends who have prior experience before making their own decisions, using information via (c) by communicating with contactable past agents may reveal another channel to achieve asymptotic learning for Bayes-rational agents. Another reason to explore this channel is the following statement by Gul and Lundholm in \cite{gul1995endogenous}: ``\textit{Informational cascades can be eliminated by enriching the setting in a way that allows prior agents' information to be transmitted.}" This general principle, however, does not reveal whether learning can be achieved with finite-bit questions\footnote{Appendix~\ref{warm-up} shows a learning example where the total bits agent $n$ spends on questions is not bounded.}.

%\footnote{If the capacity of questions, represented by bits, can go to infinity as the agent index increases (as $\log(n)$ for a simple scheme or $\log\log(n)$ for a more sophisticated scheme), then it is trivial that learning can be achieved; it's just a consequence of the law of large numbers.}
\vspace{6.0pt}
\noindent\textbf{Problem Statement:} Motivated by behaviors in real social networks and the quoted statement in \cite{gul1995endogenous}, we want to study if querying past agents with bounded number of finite capacity questions sequentially, but without relaxing assumptions of (a) and (b) used in BHW model \cite{BHW1992}, can achieve asymptotic learning or not. More precisely, we assume the essential features of the problem of sequential social learning with a common database recording actions of agents \cite{BHW1992,welch1992,banerjee1992}, but allow each Bayes-rational agent to ask a single, private and finite-capacity (for response) question of each among a subset of past agents (friends or assumed experts). Note that the Maximum Aposteriori Probability (MAP) rule is still individually optimally and will be used by each agent for her decision. We emphasize here that the BHW model \cite{BHW1992} has private signals with a bounded likelihood ratio, and with Bayes-rational agents only the ideas in Theme 2 are known to yield the learning outcome. In this paper, allowing for private questions, we want to answer the following three questions: 1) \textit{What is the minimum set of agents that need to be queried as a function of the agent's index and information set?}; 2) \textit{What is the minimum size of questions needed to achieve learning?}, and 3) \textit{Can we construct a set of questions that will achieve either minimum?}

	Before stating our main contributions, we highlight the main differences between this work and the existing literature. In the literature, there are three well-developed themes to achieve social learning: one, the presence of many agents with rich private signals (with unbounded likelihood ratios); two, the obfuscation of the history of agents' actions when viewed so that at least a minimal number of agents have their private signal revealed by their actions; and three, the presence of many (or all) non-Bayesian algorithmic agents. We briefly describe\footnote{See Related Work in Appendix~\ref{App:relatedwork} for a detailed discussion.} these three themes in the following:
	%and the differences with our work in following:
	
	\noindent \textbf{\textit{Theme 1:}} Compared to the seminal BHW model \cite{BHW1992}, models in this theme use information from (b) and generalize the information content in (a). Unlike BHW and our model considering private signals with bounded likelihood ratios, the seminal work by Smith and S{\o}rensen \cite{smith2000} allowed generalized models for richer signals characterized by the (unbounded) likelihood ratio of the two states deduced from the private signals; and learning is achievable under richer signals.
	
	\noindent \textbf{\textit{Theme 2:}} Unlike BHW and our model disclose full history to every agent, works in this theme use partial history in (b). By revealing a random or a judiciously designed deterministic subset of past agents' actions in networks, \cite{acemoglu2011bayesian} and \cite{mossel2015strategic} respectively showed the asymptotic learning can be achieved through revealing either a random subset of history or at most $d$ past agents in a special class of networks. However, presenting reduced or distorted views of the history of agents' actions is philosophically troubling as it implicitly assumes that the distortions made are always benign or for efficiency, and also posits an implicit trust in the system designer on the part of the agents.
	%\cut{I will add a citation here, probably Ryota Iijima's work or Mira Frick's work.} 
	% I checked their webpages, the manuscripts on their working papers of the misperception issue on info designers are not yet available (only slides there).
	
	\noindent \textbf{\textit{Theme 3:}} Here, by allowing non-Bayesian agents or changing payoff structures, there is a class of literature on distributed binary hypothesis testing problems that follows Cover \cite{cover1969hypothesis} and Hellman and Cover \cite{hellman1970learning} and falls under (c). In this model, agents can only observe the actions from their immediate predecessor, and their actions now try to transmit information and to learn collaboratively the true state of the world. It is shown in \cite{hellman1970learning} that the learning, often called optimal decision rules in this literature, cannot be achieved under bounded likelihood ratio of signals. However, if observing $K\geq 2$ immediate predecessor is allowed, authors in \cite{drakopoulos2012learning} showed that asymptotic learning can be achieved using a specific set of four-state Markov chains. From the perspective of information design, this approach of designing Markov chains for learning is similar in spirit to partitioning of information sets, but for non-Bayesian agents.
	
%\vspace{-6pt}	
%\paragraph{Our Contribution} 
\vspace{1.0pt}
\noindent\textbf{Main Contributions:} Our analysis of the modified BHW model \cite{BHW1992} described above yields two main contributions: \\
%\begin{enumerate}
%\item 
1) To the best of our knowledge, we are the first to highlight the ability to change information structures among agents to achieve learning in social learning problems. The approach used in this work, namely partitioning information sets, is closely related to the Bayesian persuasion \cite{kamenica2011bayesian, rayo2010optimal}. Designs achieving asymptotic learning in our work can be viewed as a ``relayed Bayesian persuasion" by persuading agents in ``possibly wrong cascades" to avoid the information cascades eventually.\\
%\vspace{-6pt}
%\item 
2) With an explicit construction of 1-bit question corresponding to the agent's possible information set and index value in the network and where agents are only allowed to query their immediate predecessors, we show that learning is achievable and answer the three questions addressed above in a single construction.

%\end{enumerate}
%\vspace{-6pt}
Note that in our approach, the system designer commits to a specific information structure (full public history of previous agents' actions plus private communications) without any reductions or distortions, and also provides the agents with a question guidebook, whose performance each agent can verify independently. The minimal nature of our learning achieving question guidebook also reveals the fragility of information cascades (Sec.16 in \cite{easley2012networks}), as a small amount of strategically delivered information leads to learning. The information revelation in our question guidebook is strategic in contrast to reviews that are \cite{le2016quantifying} generated via an exogenous process (and revealed only for some specific actions), and furthermore, lead to learning \cite{acemoglu2017fast} only if the signals are rich. A subtle but rather interesting point of our approach is the ``relayed persuasion" aspect wherein we only aim to persuade particular agents chosen by our design instead of agents chosen by nature as is commonly seen in Bayesian persuasion \cite{hedlund2017bayesian, ely2017moving}.

\section{Problem Formulation} \label{sec:Problem}
%\paragraph{BHW Model and Information Cascade}
\textbf{BHW Model and Information Cascade}      
     Starting from the seminal BHW model\cite{BHW1992}, we consider a model with binary states of the world $\Theta=\{A,B\}$ that are equally likely to occur and a countable number of Bayes-rational agents, each taking a single action sequentially as indexed by $n \in \{1,2,...\}$. At each time slot $n$, agent $n$ shows up and chooses an action $X_n=\{\bar{A},\bar{B}\}$ with the goal of matching the true state of the world. Formally, for every agent $n$, the utility function $u_n(\Theta,X_n)$ is defined as $u_n(A,\bar{A}) = u_n(B,\bar{B}) = 1$ and $u_n(A,\bar{B})=u_n(B,\bar{A})=0$.
 
     Before agent $n$ takes her action, she receives an informative but not revealing private signal $s_n$. Her private signal is received through a binary symmetric channel (BSC) with a time-invariant crossover probability\footnote{The time-invariance assumption is mainly for the ease of algebraic complexity. The assumption we need for the model is the crossover probability of agents is common knowledge.} 1\textminus $p$, where $p\in (0.5,1)$, $s_n\in\{A,B\}$ for all $n$. An agent can also observe the full history of actions taken by her predecessors (agents with index lower than her, if any), $H_n \in \{\bar{A},\bar{B}\}^{n-1}$. The agent then computes the posterior belief of the true states of the world (alternatively, the likelihood ratio of one state versus the other), and takes the action corresponding to the most likely state. As in \cite{monzon2017aggregate,le2017information}, if indifferent between the two actions, we assume that the agent follows her private signal, instead of randomizing, following the majority, etc.%or some other strategy.
%\vspace{-12pt}     
     \begin{dfn} \label{def:AL}
     	In a model of Bayesian observational learning, \textbf{asymptotic learning} (in probability) is achieved if $\lim_{n\to \infty}P(\{X_n=\bar{A}|\Theta=A\})=\lim_{n\to \infty}P(\{X_n=\bar{B}|\Theta=B\})=1$. If asymptotic learning is not achievable, we say that an \textbf{information cascade} has occurred.
     	\vspace{-6pt}  
     \end{dfn}
     
     Under the above setting, we know that the BHW model\cite{BHW1992} has an information cascade because all agents will ignore their private signals and imitate their immediate predecessor's action when the difference between the number of actions observed is greater than or equal to two\footnote{See Section 16.2 in \cite{easley2012networks}.}.
    
%\vspace{-12pt}  
%\paragraph{Deterministic Network Topology with Finite Channel Capacity} 
\vspace{1.0pt}
\noindent \textbf{Deterministic Network Topology with Finite Channel Capacity}         
     Agents' communication is modeled by a deterministic network $G(\mathbb{N},E)$ with directed edges. On each such edge, agents are allowed to transmit information up to pre-determined $K$ bits through a perfectly reliable communication channel. Since agent $m$ takes action prior to $n$ for all $m<n$, only directed edges $(m,n),m<n$ are allowed in the network. Using these additional directed edges, agent $n$ can ask questions individually and privately to predecessors in the set $\mathcal{B}_n=\{m|(m,n)\in E\}\subseteq [n-1]$ in line with the network topology, before making her decision. Agent $n$ asks the set of questions \textbf{after} receiving her private signal. Critically, the deterministic\footnote{Discussions about the difference between deterministic and randomized network on this problem are in Appendix~\ref{detvsranNW}.} network topology is common knowledge.
      %Unlike randomized network topology\footnote{A common assumption on randomized network topology is to assume that distributions of edges are common knowledge but the realized set of predecessors $\mathcal{B}_n=\{m|(m,n)\in E\}$ that agent $n$ can communicate is a private information of agent $n$, otherwise higher order beliefs will be hard to analyze. Under this assumption, if information cascade occurs and the information allowed to get from predecessors with communication channels is restricted to their private signal, it is analogue to the model in \cite{acemoglu2011bayesian} with neighbor set $N_n=\mathcal{B}_n \cup \{i\} \cup \{i+1\}$, where agent $i,i+1$ are the agents to start the cascade. Then, Theorem 4 in \cite{acemoglu2011bayesian} states the condition that asymptotic learning can be achieved in randomized networks.} analogue to \cite{acemoglu2011bayesian}, there is an informational Braess's paradox in deterministic network, i.e., two information sources both suggest agent $n$ to ignore her private signal and take action $\bar{A}$ may lead agent $n$ take action $\bar{B}$. An example will be provided in the Appendix to illustrate this phenomenon; but the main messages here are that problems in deterministic and randomized network topology are substantially different.

%\vspace{-12pt}       
%\paragraph{Questions and Information Sets}
\vspace{1.0pt}
\noindent\textbf{Questions and Information Sets} Since the network topology is pre-determined, the set of agents that a particular agent can query for information is exogenous. Given the topology, we allow the information designer to supply the agents with questions that they can ask the set of contactable predecessors $B_n,~ n\in \mathbb{N}$ (in order to distinguish which information set they are at). Assume $m\in \mathcal{B}_n$, questions being asked by agent $n$ to agent $m$ are allowed to be dependent not only on the private signal and history observed by agent $n$, but also on the answers of questions asked to other agents in $\mathcal{B}_n$ prior to asking agent $m$. In short, the order that agents in $\mathcal{B}_n$ are queried in matters in the general framework. However, in this paper, we restrict our attention to only allow agents to ask questions simultaneously\footnote{The general framework is discussed in Appendix~\ref{extQGB}.} to predecessors in $\mathcal{B}_n$ to avoid complex analyses owing to the recursive analysis required to understand the engendered higher order beliefs. With this degeneration, such a collection of a set of questions is called a \textit{question guidebook (QGB)}. With this background in mind, we formally define a QGB. %question guidebook.
	
	%A particular collection of a sequence of an ordered set of questions the information designer provides to agents will be collectively called a \textit{extended question guidebook}. However, analyzing extended question guidebooks, which also prescribes the order of the queries of predecessors in $\mathbb{B}_n$, is complex owing to the recursive analysis required to understand the engendered higher order beliefs. Therefore, in the rest of the paper, we only allow agents to ask questions simultaneously to predecessors in $\mathcal{B}_n$ and avoid the complexity of query order in $\mathcal{B}_n$. Such a collection of questions is called a  \textit{question guidebook}. We will, however, provide a simple example of how a general question guidebook can be useful in Section \ref{warm-up}. With this background in mind, we formally define what is a question guidebook.
   
    \begin{dfn} \label{def:QB}
     A question guidebook $Q$ is a function $Q(n,H_n,s_n,m)$ that gives agent $n$ a set of predetermined questions conditioned on agent $n$'s private signal, the observed history, and the predecessor agent $m$ who agent $n$ queries. 
    % \vspace{-6pt}
     \end{dfn}
     We assume that agents being queried are truthful in their responses, which is natural as they cannot gain from lying since their (payoff relevant) action has already been taken. 
    
    A QGB should have two important properties: feasibility and incentive compatibility. Intuitively, a QGB is feasible if agent $n$ only asks questions that she knows queried agent $m<n$ can answer.  This avoids the ambiguity of what would happen if agent $m$ does not have the information to answer the question asked of it. Formally, we call an observed history $H_n$ \emph{feasible} for a guidebook $Q$ if there is a nonzero probability of obtaining this history given all of agents follow the guidebook $Q$, that is $\mathbb{P}(H_m|Q\text{ is followed})>0$, $\forall m<n$; to avoid cumbersome notation we will simplify this to $\mathbb{P}(H_m|Q)$. Similarly, we call a question asked of agent $m$ by agent $n>m$ \emph{feasible} if for every $s_n$ and feasible history $H_n$, the question can be answered by agent $k$ using only the information that it possesses.  
    
    \begin{dfn}%\vspace{-6pt}
     A question guidebook $Q$ is \textbf{feasible} if for every agent $n$, under every feasible observed history $H_n$\footnote{Since the history is fully observable in our model, $H_i\subset H_j$ for all $i<j$, it is sufficient to check $\mathbb{P}(H_{n-1}|Q)>0$ for the feasibility of history $H_n$. However, the above definition still works when the history is only partial observable.} and $s_n$, all questions provided by $Q$ are feasible.          
     \vspace{-6pt}
     \end{dfn}     
     
    A QGB is \emph{incentive compatible} if agent $n$ always asks a question from the set of questions that maximizes her expected payoff. 

     \begin{dfn} %\vspace{-6pt}
     A question guidebook $Q$ is \textbf{incentive compatible} if for every agent $n>1$, under every feasible observed history $H_n$ and $s_n$, the set of questions provided to her maximizes her expected utility among all feasible questions she can ask to agents in $\mathcal{B}_n$. \vspace{-6pt}
     \end{dfn}
     
     Given a feasible and incentive compatible QGB, each question serves to partition sequences of private signals to information sets: the possible states of randomness (underlying $\sigma$-algebra) into subsets (sub-$\sigma$-algebras to be more precise). Note that the answer to such a question specifies the set where the current realization belongs to. Viewing questions from the perspective of partitions helps in justifying the following assumption.

\begin{asm} \label{asm1} %\vspace{-6pt}
	We assume that there is no cost in asking questions. Thus, if no feasible question will change the expected payoff of an agent, there is no restriction for an information designer to design any question guidebook demands that this agent asks any provided questions within the prescribed capacity limit. 
	\vspace{-6pt}
\end{asm}     
     
	Assumption \ref{asm1} stands when a limited number of questions are asked at no cost; and every agent is willing to help her successors, in essence bringing in some level of cooperation. The reason for this assumption is because even though no questions can benefit the current agent, her information could be beneficial to future agents. This is inspired by behavior in social networks, where is common to see people are willing to help their friend/neighbor nodes. Henceforth, we will assume such behavior. Additionally, we will also assume that the QGB is common knowledge\footnote{A subtle weaker assumption is discussed in Appendix \ref{commknown}}.

\vspace{-12pt}
\section{Telephone-game Network, Strategy, and Question Guidebooks} \label{sec:Markov}
\vspace{-6pt}
     To exhibit how designing QGBs can achieve learning, in the following sections, we consider a special network, called telephone-game network. In this network, we only endow each pair of consecutive agents with a communication channel of finite capacity. In other words, the topology of this communication network is a directed line graph. Therefore, besides observing the actions in history, the only source for an agent to get additional information is asking finite-bit questions to her immediate predecessor (if any). Since the capacity of every channel between each agent and her immediate predecessors is finite, agents may not be able to get the information of all private signals observed by all her predecessors. This is a very basic model\footnote{The telephone-game network has the same topology as a tandem network, but the literature on learning in tandem networks uses one-way communication made before the next agent sees her private signal. In contrast, our questions are conditional on the received private signals. To avoid any confusion, we avoid the tandem network terminology.} to start studying if the asymptotic learning is achievable by finite-bit questions.%\footnote{In networks that as $n$ goes to infinity, if the probability that the number of reachable predecessor of agent n is finite doesn't converge to zero in probability, we know there is impossible to achieve asymptotic learning. After ruling out this class of networks, if we start from the last node and always keep one of her contactable predecessor backwardly, we will get a line graph analogue to this model.}
      In this Section, we will first propose a high-level strategy that may achieve asymptotic learning, then study the following two important questions: first, how to design QGBs to get and accumulate the information we want, we will use the term \textit{Information set partition} in the following paragraphs; second, what are the necessary and sufficient conditions to implement the strategy we proposed to achieve learning?
    
    \vspace{-12pt} 
     \subsection{Threshold-based Strategy} \label{TBstrategy}
     \vspace{-6pt}
     Capacity constraints may make agents incapable of getting the information of all their predecessors' private signals. However, there are some ``clues'' that agents can learn from a fixed length sequence of private signals owing to the different signal distributions among states of the world, e.g., when the true state is $A$, private-signal sequences containing three consecutive $A$s will occur more frequently than when the true state is $B$. With this intuition in mind, we propose the following strategy.
        
   Suppose we are in an $\bar{A}$ cascade, i.e., every following agent ignores her private signal and takes action $\bar{A}$ if no information from questions is provided. As we all know, if the true state is $B$, agents are less likely to get multiple private signal $A$s in a row. Hence, we propose the following strategy:\\
%   \begin{itemize}
%   \item 
\textbf{\textit{Step 1:}} Given a predetermined threshold $K$, agents in an A cascade ask questions to know if the event ``$K$ consecutive private $A$ signal are received by agents'' occurs or not.\\%\vspace{-6pt}
%   \item 
\textbf{\textit{Step 2:}} Using the information of whether the event occurs or not, an agent updates her likelihood ratio of state $B$ versus $A$. Starting from agent $m$ with a fixed likelihood ratio of $B$ versus $A$, there will be the first agent $n>m$ with a positive probability that her updated likelihood ratio $B$ versus $A$ will cross $1$.\\%\footnote{Once we know there is no $K$ consecutive $A$s a sequence $s_m,s_{m+1},...s_l$, where $l>K+m$, the feasible sequences left will statistically have more $B$s than $A$s under the assumptions of BSC channel on signals and equal prior on states.}.
%   \vspace{-6pt}
%   \item 
\textbf{\textit{Step3:}} If agent $n$ has likelihood ratio $B$ versus $A$ $\geq 1$, her best strategy is to stop the cascade. If she doesn't stop the cascade, then we use this agent $n$ as a new starting agent (agent $m$ in previous step) and start a new round checking if the event occurs in the following agents. %\vspace{-6pt}
%\end{itemize}    
     
	Let's defer the details of the implementation of this strategy to the end of this Section and to Section \ref{sec:DetQB}. Suppose there is a way to implement the threshold-based strategy presented above without any incentive compatibility issues, then an important feature we can observe here is that not every agent in a cascade gets the chance to stop the cascade, i.e., some of them just ask questions and forward the information of whether the event occurs or not to their successor. Note that Assumption \ref{asm1} grants the flexibility of designing a QGB with these questions. 
	 %Recall Assmption \ref{asm1} that made in Section \ref{sec:Problem}, we are allowed to design questions for agents who cannot be benefit from any feasible questions.
	
	 Prior to designing QGBs to implement the threshold-based strategy, we know that for agents who have a chance to stop the cascade conditional on the observed history, the questions designed for them are restricted to the set of questions that maximize their expected utility. However, for agents indifferent to any questions, we can design any question for them. Thus, in order to design questions systematically, we categorize agents into two types conditional on the QGB in practice and the observed history. 
     \begin{dfn} \label{def:agenttype} \vspace{-6pt}
        Given an observed history in a feasible and incentive compatible QGB, an agent is \textbf{active} if she has a positive probability to stop the cascade, otherwise she is called \textbf{silent}.
        \vspace{-12pt}
     \end{dfn}          
   To further clarify the above definition, conditional on the history, agents who may benefit from the answer of questions are active agents, otherwise they are silent.
   
   \subsection{Questions and Information Set Partition} \label{InfoSet}
   \vspace{-6pt}
	Here, we detail how questions help gather information, and how the capacity constraints limit the (lossless) information aggregation. To begin, let $T_n$ be the set of sequences of private signals from agent $1$ to $n$\textminus 1, $T_n \ni (s_i)_{i=1}^{n-1}$. The information space of agent $n$, corresponding to a observed history $H_n$ and the question guidebook $Q$, is the set of all feasible sequences $T^*_n(Q,H_n) \subseteq T_n$. The questions assigned to agent $n$ help her to update her posterior belief of the true state by partitioning $T^*_n$ into information sets $I_n(Q,H_n)$ and telling agent $n$ at which set she is in. For simplicity of notation, we denote the collection of information sets $I_n(Q,H_n)$ by $\mathcal{I}_n(Q,H_n)$.
	
	By viewing question as an information set partition tool, studying the transition from $\mathcal{I}_n(Q,H_n)$ to $\mathcal{I}_{n+1}(Q,H_{n+1})$ can help us understand how information aggregates. A well-known approach to analyze this class of problems systematically is by mapping these information-set transitions to Markov chains with transition matrices\footnote{To allay readers' concerns on what typical questions should look like and why we use Markov chains to model the QGBs, an example demonstrating the delicateness of the design problem is in Appendix~\ref{sec:delicate}.}. To formulate the mapping, we associate a QGB with a sequence of sets of Markov chains sharing the same state space. Denote $(G,(\mathcal{M}_n))$ to be a sequence of sets of Markov chains, where $G$ is the state set and $\mathcal{M}_n$ is the set of Markov chains at time $n$. To represent all distinguishable information sets, an inequality  $|G|\geq \sup_{n,H_n\in \mathcal{H},s_n} |\mathcal{I}_n(Q,H_n)|$ is required for the corresponding $(G,(\mathcal{M}_n))$. Since state space $G$ is shared by all $\mathcal{M}_n$, hereafter, we simplify words ``agent knows her information set corresponding to state $G_i$" to ``agent goes to $G_i$."

\vspace{-12pt}    
    \subsection{Conditions for Asymptotic Learning}
    \vspace{-6pt}
		In the threshold-based strategy proposed above, with the threshold is fixed and supposing that we are in a cascade, the number of silent agents between two active agents (if any)\footnote{We show that active agents cannot come consecutively in the proof of Claim \ref{lem1} in Appendix~\ref{pf:lem2}} is only determined by the (prior) likelihood ratio of the first silent agent who arrives right after an active agent. Denote the number of silent agents between the $k$-th and $k+1$-th active agents by $m^k$. One necessary condition to achieve asymptotic learning is having $m^k$ go to infinity with $k$. If this does not happen, then we always have a positive probability (lower-bounded by a constant) to stop a cascade, which will stop every correct cascade too. However, if $m^k$ goes to infinity too fast, then we cannot guarantee that a wrong cascade will eventually be stopped. Ergo, to achieve asymptotic learning, the key is to choose questions that carefully control the growth rate of $m^k$. In the next Section, we will detail an implementation only using a $1$-bit question for each agent. The precise necessary and sufficient conditions to achieve learning for threshold-based QGBs will be presented in Lemma 2.

\vspace{-12pt}
\section{Implementation of Threshold-based Strategy -- Asymptotic \\Learning achieved in One-bit Questions} \label{sec:DetQB}
\vspace{-6pt}
    In this section, we present our main result: asymptotic learning is achievable via an explicit construction of a question guidebook that uses a threshold-based strategy in the telephone-game network where agents ask exactly a single one-bit question.
    
    \begin{thm}\label{thm:main} 
        In the telephone-game network, there exists a question guidebook using 1-bit capacity questions that achieves asymptotic learning.
    \end{thm}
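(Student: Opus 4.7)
The plan is to exhibit a concrete 1-bit question guidebook that implements the threshold-based strategy of Section~\ref{TBstrategy} and then to analyze the resulting log-likelihood walk at the ``active'' agents to verify asymptotic learning. For the construction, I use two information-set labels $G_0, G_1$ during each cascade and alternate agents between active and silent roles according to a pre-announced schedule. In an $\bar A$-cascade, a silent agent $n$ asks her immediate predecessor the single question ``are you in state $G_1$?''; she places herself in $G_1$ iff her predecessor is in $G_1$ and her own private signal is $A$, otherwise in $G_0$. Thus $G_1$ means ``every private signal received since the most recent active agent is $A$''. The next active agent asks the same 1-bit question of her silent predecessor, performs the Bayes update with the returned bit, and switches to $\bar B$ exactly when her posterior of $B$ exceeds $\tfrac12$; otherwise she plays $\bar A$ and opens a new silent block. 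The number $m^k$ of silent agents between the $k$-th and $(k{+}1)$-th active agents is set to $\lceil c\log k\rceil$ for a constant $c$ in the interval $\bigl(1/\log\frac{1}{1-p},\ 1/\log\frac{1}{p}\bigr)$, which is non-empty since $p>1/2$. Pre-cascade agents follow the MAP rule and are assigned an arbitrary 1-bit question by Assumption~\ref{asm1}; $\bar B$-cascades are handled symmetrically.

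Feasibility is immediate: each silent agent's state is a deterministic function of the bit she just received and her own signal. Incentive compatibility for silent agents holds by Assumption~\ref{asm1}, because their log-LR given the 1-bit information received cannot cross the decision threshold under the chosen schedule; active agents, whose predecessor's state is a single bit, are maximally informative in asking for that bit, hence incentive compatible among all 1-bit questions.

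For the learning analysis I would track the log-posterior ratio $\ell_k := \log \frac{\Pr(B\mid \text{info})}{\Pr(A\mid \text{info})}$ at the $k$-th active agent, conditional on the cascade still being intact. Each round updates $\ell_k$ by $-(m^k+1)\log\frac{p}{1-p}$ when the returned bit is $1$ (a strong down-jump, since ``all $A$'s'' is strong evidence against $B$) and by $\log\frac{1-(1-p)^{m^k+1}}{1-p^{m^k+1}}\approx p^{m^k+1}$ when the bit is $0$ (a tiny up-jump toward $B$). Under $\Theta = B$, the down-jumps occur with probability $(1-p)^{m^k+1}$, which is summable in $k$ by the choice of $c$; by Borel--Cantelli only finitely many such jumps occur. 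Meanwhile $\sum_k p^{m^k+1}$ diverges, so the up-jumps drive $\ell_k\to+\infty$ almost surely, and the wrong $\bar A$-cascade is broken in finite time. Under $\Theta = A$ the symmetric computation gives $\ell_k \to -\infty$ almost surely, so the probability that the correct cascade is broken after round $k$ tends to $0$.

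The main obstacle is coupling this per-cascade analysis to the global statement of asymptotic learning, since every broken cascade is followed by at most $O(1)$ rounds of signal-following and then a new, possibly again wrong, cascade. I would close the argument by a second Borel--Cantelli bound: under the correct state, the per-cascade probability of a spurious switch is summable, so the total number of cascade reversals is almost surely finite; after the last reversal the cascade is correct and the preceding bound gives $\Pr(X_n \neq \bar\Theta\mid \Theta)\to 0$. Careful choice of $c$ within the stated interval, together with monotonicity of $m^k$, will let me make both series controllable simultaneously, which is the delicate quantitative step.
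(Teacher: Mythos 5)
Your construction has a fatal flaw that is independent of the scheduling and Borel--Cantelli issues: with $G_1$ defined as ``every private signal since the last active agent is $A$,'' the only answer that could push an active agent toward stopping an $\bar A$-cascade is ``not in $G_1$,'' i.e.\ ``at least one $B$ occurred in the block.'' The likelihood ratio of that event is $\frac{1-(1-p)^{m+1}}{1-p^{m+1}}$, which is bounded above by $\frac{p}{1-p}$ and in fact tends to $1$ as $m$ grows --- the bit becomes \emph{less} informative as the blocks lengthen. An active agent in a cascade starts with a public log-likelihood deficit of at least $2\log\frac{p}{1-p}$ toward $A$, her own $B$ signal recovers only $\log\frac{p}{1-p}$, and the conditioning on ``no earlier active agent stopped'' only pushes her further toward $A$ (the paper's recursion \eqref{eqn:ldecrease} shows $\ell_{G_1}$ is strictly decreasing). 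So for any block length $m\geq 2$ no agent's posterior can ever cross $1/2$, wrong cascades survive forever with positive probability, and the scheme does not learn. This is precisely why the paper tracks the event ``no two consecutive majority signals'' ($K=2$) rather than ``all majority signals'': the stopping-relevant answer there has likelihood ratio $h^+(m)/h^-(m)$, which grows \emph{without bound} in $m$ (Appendix~\ref{hh-converge}), and unboundedness of this ratio is what makes stopping possible at all under bounded-LR private signals.

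Even setting that aside, you treat $m^k=\lceil c\log k\rceil$ as a free design parameter, but active versus silent status is not chosen by the designer: by Definition~\ref{def:agenttype} an agent is active exactly when her MAP posterior can cross the threshold, which is pinned down by the guidebook and Bayes' rule. The paper therefore \emph{derives} $m^k$ endogenously as $\min\{m\,:\,\ell_{G_1}^{H_{a^k+1}}h^+(m)/h^-(m)\geq 1\}$ and then spends most of the technical effort proving two-sided bounds on how long $m^k$ dwells at each value $s$ (the $\underbar{n}_s$ and $\bar{n}_s$ estimates), which is exactly the ``delicate quantitative step'' you defer. Your incentive-compatibility claim for silent agents is asserted rather than proved (it is the content of Claim~\ref{lem1} and Lemma~\ref{lem2} in the paper), and the random walk you analyze --- driven by the realized sequence of bits --- is not the posterior of the $k$-th active agent, who observes only her own bit plus a public history that reveals merely that no predecessor stopped; the correct update involves the $\frac{1-h^+(m^k)}{1-h^-(m^k)}$ conditioning factors, not the per-block bit realizations. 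To repair the proposal you would need to replace the tracked event by one with unbounded likelihood ratio and let the block lengths be determined by the resulting posterior recursion, at which point you essentially reconstruct the paper's argument.
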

    \vspace{-6pt}
    Before presenting the proof, we first construct the QGB, and argue the feasibility and incentive compatibility of the designed QGB. Then, we provide the necessary and sufficient conditions when this class of QGBs achieves asymptotic learning. Finally, we prove the theorem by showing the conditions are satisfied in the constructed QGB.
    
\noindent\textbf{Construction of the Question Guidebook}    
	 Here we implement a QGB using the threshold-based strategy introduced in Section \ref{TBstrategy}. Under the constraint of 1-bit capacity  on questions, we choose $K=2$. In other words, in the QGB we are going to construct, we know that once two consecutive silent agents get a private signal of the same type as the current cascade, there is no chance to stop the cascade at the immediate next active agent.         
        
        Without loss of generality we assume that the true state of the world is $B$ so that an $\bar{A}$ cascade is a wrong cascade and a $\bar{B}$ cascade is a correct one. Moreover, to avoid trivial questions in the QGB, \textit{the question guidebook becomes operational only when a cascade is initiated}. Thus, an agent not in a cascade uses her private signal and does not need the guidebook.
        
      As described in Section \ref{TBstrategy}, we design different questions for active and silent agents in a cascade. Since only 1-bit questions are allowed, Section \ref{InfoSet} and accompanying details in Appendix~\ref{QGBtoMC} suggest that the QGB will consist of the partition of the (evolving) history into five sets. Next we illustrate the QGB by providing the corresponding Markov chains first and then detailing the questions actually being asked in each possible state. The Markov chains are depicted in Figure \ref{fig1} assuming that a $\bar{A}$ cascade is underway\footnote{In a $\bar{B}$ cascade, the same guidebook applies but with the $A$s and $B$s swapped.}, and they prescribe how the information space gets partitioned based on the type of the agent (active or silent), the private signal of the agent and the response from the immediate predecessor to the question (if it is asked). 
    \begin{figure}[H] 
    \centering
	\includegraphics[width=0.55\linewidth]{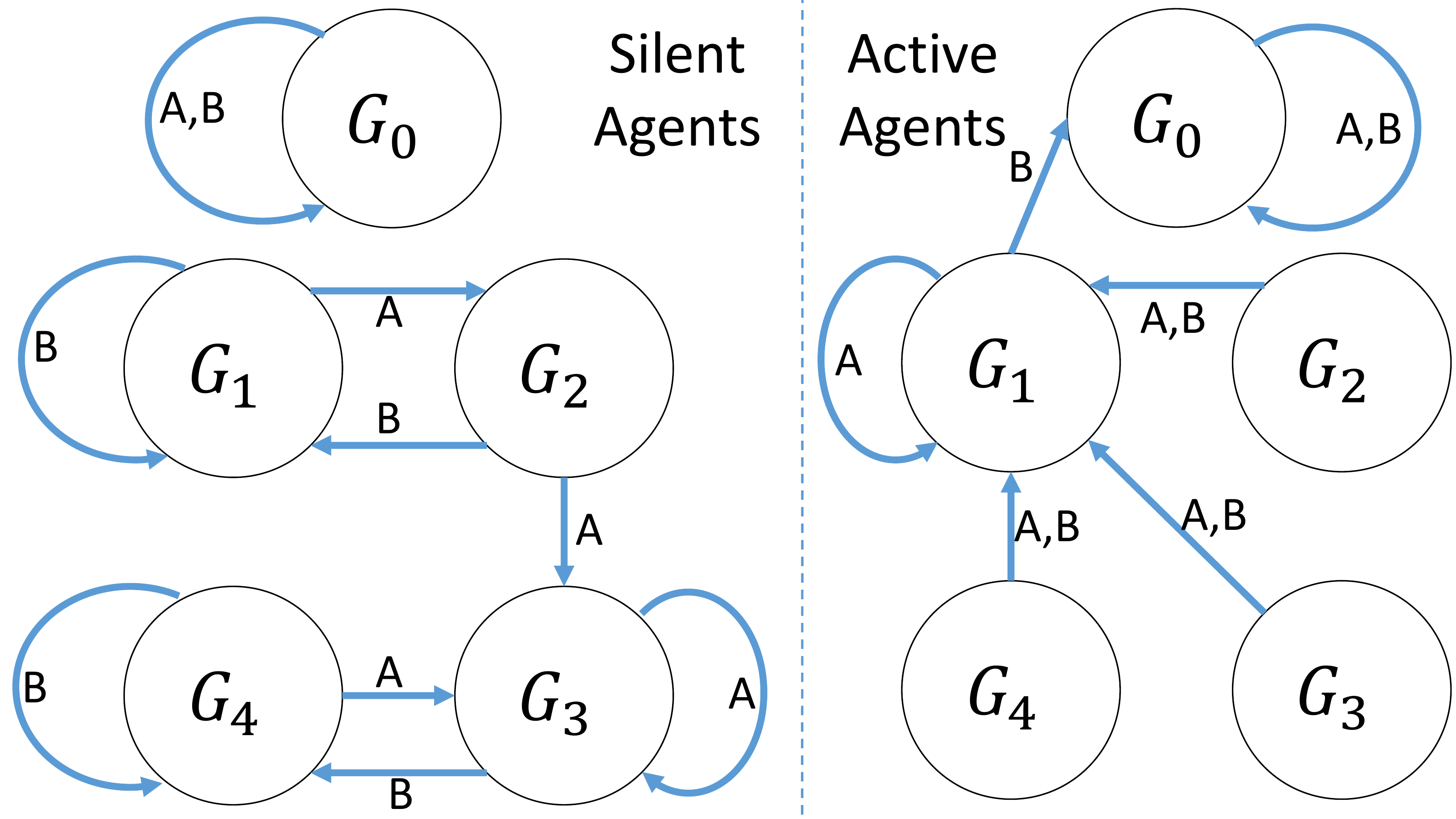}
    \caption{Markov chains of proposed threshold-based question guidebook}
    \label{fig1}
    \end{figure} \vspace{-12pt}         
        The corresponding Markov chains in the designed QGB, as depicted in the left of Figure \ref{fig1}, endows every silent agent with the same transition matrix. This transition matrix is given by two different determined questions conditional on the silent agent's private signal\footnote{As described earlier, a silent agent's partition can never be $G_0$.}; and the questions and corresponding information sets are as follows.
\vspace{-6pt}        
\begin{table}[H]
\begin{tabular}{|l|l|l|}
\hline
 &  Receives private signal \textbf{B} & Receives private signal \textbf{A} \\ \hline
Question asked & Are you in $\{G_1, G_2\}$? & Are you at $G_1$? \\ \hline
Action under positive answer& Go to $G_1$  & Go to $G_2$\\ \hline
Action under negative answer& Go to $G_4$ & Go to $G_3$\\ \hline
\end{tabular}
\end{table}     
\vspace{-6pt}          
       % When a silent agent receives private signal \textbf{B}, she asks the question:"\textbf{Is your information set in either $G_1$ or $G_2$?}" to her immediate predecessor. Her partition is then $G_1$ if the answer is yes, and $G_3$ if getting negative answer.
        
      %  Au contraire, silent agent receiving \textbf{A} asks the question:"\textbf{Is your information set in $G_1$?}" to her immediate predecessor. Her partition is then $G_2$ for a positive answer, and $G_4$  otherwise.
        
%        After describing questions designed for silent agent, we move on to questions for active ones. 
        Since every active agent only cares if her immediate predecessor is in $G_1$ when she receives a private signal $B$ so she can stop the cascade (play $\bar{B}$), questions are only needed in that case.

\vspace{-8pt}  
\begin{table}[H]
\begin{tabular}{|l|l|l|}
\hline
 &  Receives private signal \textbf{B} & Receives private signal \textbf{A} \\ \hline
Question asked & Are you at $G_1$? & No questions asked \\ \hline
Action under positive answer& Go to $G_0$ and stop cascade & \multirow{2}{*}{Go to $G_1$}\\ \cline{1-2}
Action under negative answer& Go to $G_1$ &\\ \hline
\end{tabular}
\end{table}         
        
\vspace{-6pt}          
%         active agent asks the question:"\textbf{Is your information set in $G_1$?}" to her immediate predecessor. The active agent's partition will become $G_0$ and stop the cascade if getting a positive response. A negative answer collected all other states together to form the new $G_1$ for future agents, i.e., the active agent's partition is becomes $G_1$. 
        
%        When the private signal is \textbf{A}, an active agent's partition directly goes to $G_1$ wherever the current state of the her predecessor is at, i.e., no question will be asked. 
        
        Before showing that the QGB is feasible and incentive compatible, we want to point out that while there exist a large set of QGBs that can achieve asymptotic learning even with the $1$-bit constraint, the proposed design simplifies the analysis and proofs, and avoids solving complex recursive system of equations with four variables.%\footnote{In general, even in 1-bit capacity model, we need to track the likelihood ratio of B versus A in every state. This will make the analysis become very tedious. The threshold-based question guidebooks, with the result proved in Lemma \ref{lem2}, avoid the hassle and allow us tracking likelihood ratio at only one state without loss of optimality}.         
    
        %give some intuition why we want to design this question guidebook. First, keeping track the likelihood ratio of B versus A in every state will make the analysis become very tedious. Hence, in order to tracking only one state, we want to design question guidebooks that one of the state has likelihood ratio of B versus A be at least $\frac{p}{1-p}$ times than other states. Second, we want the probability that a wrong cascade can be stopped at $m^{th}$ active agent has a close form for every $m$ to simplify our work on analysis. The transition matrix of silent agent in the proposed question guidebook can achieve this. Third, grouping all events to one state once an active agent continues a cascade guarantees whichever state the last silent agent in that window at, the probability that the cascade will be stopped only depends on the number of silent agents within the window. This design simply the analysis by avoiding analyzing recursive system of equations with four variables in the 1-bit capacity model. In a nutshell, there exists plenty of question guidebooks can achieve asymptotic learning, some may have a higher learning rate than the one we designed here. However, our main point here is to demonstrate a question guidebook that the asymptotic learning is achieved.
                
\noindent \textbf{Feasibility and incentive compatibility of the question guidebook}        With the proposed QGB in hand, the first step is to verify the feasibility and incentive compatibility. Feasibility of this QGB is straightforward because every agent, no matter whether she is active or silent, knows her current state. Therefore, she can definitely answer the yes-no question about her current state to pass the feasibility check. 
        
        Showing incentive compatibility is equivalent to showing that $G_1$ is the only state that can have likelihood ratio of B over A crossing $1$ for any active agent. Here, we will establish a result that applies to more generally than the designed QGB. We will prove that every \textit{threshold-based} QGB has positive probability to stop the cascade only when the immediate predecessor of a active agent is at $G_1$ (threshold event not holding for current majority, i.e., $\bar{A}$).   

\vspace{-6pt}
        \begin{dfn} \label{dfn5}
        Given a question guidebook such that every silent agent uses the same transition matrix, a question guidebook is \textbf{threshold-based} if for every silent agent whose neighbor is in a transient state (e.g., $G_1, G_2$ in this question guidebook) of the Markov chain, she goes to state $G_{i+1}$ upon receiving a private signal in an observed majority or goes to $G_{\max \{1,i-1 \}}$ upon receiving a private signal in an observed minority. Furthermore, active agents continue the cascade bring all feasible sequence of private signal to $G_1$.
        \end{dfn}         
        \begin{lem} \label{lem2} 
        In threshold-based question guidebooks, active agents can only stop the cascade at $G_1$.
        \end{lem}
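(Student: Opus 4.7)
\textit{Plan.} I would prove Lemma \ref{lem2} by showing that whenever an active agent's predecessor is at some $G_i$ with $i\ge 2$, the active agent's posterior likelihood ratio of $B$ over $A$ stays strictly below $1$, so her MAP rule keeps her in the $\bar A$ cascade. WLOG $A$ is the majority signal. I would first reduce to the case where the active agent's private signal is the minority $B$; with signal $A$ the posterior only strengthens the cascade and, per Definition \ref{dfn5}, the active agent is continuing to $G_1$ regardless. With signal $B$ and predecessor at $G_i^{n-1}$, Bayes rule gives the posterior likelihood ratio
\[
\mu\, L_i(n-1)\,\frac{p}{1-p},\qquad L_i(n-1):=\pi_i^B(n-1)/\pi_i^A(n-1),
\]
where $\mu$ is the public likelihood ratio entering agent $n$; the task is to show this is $<1$ for every $i\ge 2$.

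For a transient state $G_i$ with $i\ge 2$, Definition \ref{dfn5} forces every silent path into $G_i$ to arrive from $G_{i-1}$ via a majority signal, which by Bayes yields $L_i(n-1) = \tfrac{1-p}{p}\, L_{i-1}(n-2)$. Substituting, the $(1-p)/p$ factor cancels exactly with the active agent's own $p/(1-p)$, and her posterior LR collapses to $\mu L_{i-1}(n-2)$. But this is precisely the posterior LR that the agent at time $n-2$, occupying $G_{i-1}$, would compute from her own information. Since by Definition \ref{dfn5} active agents only occupy $G_0$ or $G_1$, for $i\ge 3$ the agent at $n-2$ is necessarily silent, so by Definition \ref{def:agenttype} and incentive compatibility $\mu L_{i-1}(n-2)<1$. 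For $i=2$ the agent at $n-2$ may also be active and continuing the cascade, in which case her own MAP decision to continue still gives her posterior $<1$, and a short computation -- using Claim \ref{lem1} (no two consecutive active agents) and the form of the active-agent question table -- ties this back to the same bound $\mu L_1(n-2)<1$.

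For the absorbing part of the chain, the same Bayes decomposition works cleanly for states entered via a majority signal (such as $G_3$ in Fig.~\ref{fig1}): the posterior collapses to a convex combination $\mu\bar L$ of the $L_j(n-2)$'s over the possible predecessor states $j$, each of which satisfies $\mu L_j(n-2)<1$ by silence, so $\mu\bar L<1$. The main obstacle is an absorbing state entered via a minority signal (such as $G_4$), where the Bayes cancellation \emph{amplifies} rather than cancels the $B$-evidence contributed by the active agent's own signal, and silence alone gives only $\mu L_4(n-1)<1$ whereas we need $\mu L_4(n-1)<(1-p)/p$. To close this gap I would exploit the structural fact that the absorbing class is strong evidence for the cascade direction -- reaching $\{G_3,G_4\}$ requires the threshold event, which is much more likely under $A$ than under $B$ -- and combine this with the cascade condition $\mu<(1-p)/p$ to force the posterior strictly below $1$ even after the minority-signal amplification.
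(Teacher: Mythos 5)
Your treatment of the transient states is sound and is essentially the paper's own argument rearranged: the paper proves the lemma by contradiction, showing that if the cascade could be stopped at $G_i$ with $i>1$ then $\ell_{G_1}^{a^{k+1}-i+1}\ge \ell_{G_i}^{a^{k+1}}\ge \tfrac{1-p}{p}$, so the agent $i-1$ steps earlier would already have been active; your cancellation $\mu L_i(n-1)\tfrac{p}{1-p}=\mu L_{i-1}(n-2)$ is the same likelihood-ratio bookkeeping phrased as a direct bound rather than as a contradiction with the earliest active agent. One caveat on your reduction: for a general threshold-based guidebook with more than two transient states, $G_i$ can also be entered from $G_{i+1}$ by a minority signal, so $L_i(n-1)$ is a weighted average rather than exactly $\tfrac{1-p}{p}L_{i-1}(n-2)$; the mediant inequality repairs this (every path into $G_i$ carries the same net factor $\bigl(\tfrac{1-p}{p}\bigr)^{i-1}$ relative to its last visit to $G_1$), and the paper's proof glosses over the same point.

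The genuine gap is the one you flag yourself and do not close: the absorbing state entered by a minority signal ($G_4$). Your proposed fix --- combining ``absorption is strong evidence for the majority state'' with the cascade condition $\mu<\tfrac{1-p}{p}$ --- is quantitatively insufficient. As the silent stretch between active agents grows, $P(G_4\mid B)/P(G_4\mid A)$ tends to $\tfrac{p}{1-p}$, because in the long run being at $G_4$ is essentially the event that the current signal is a minority signal; hence the active agent's posterior $\mu L_4(n-1)\tfrac{p}{1-p}$ tends to $\mu\bigl(\tfrac{p}{1-p}\bigr)^2$. Keeping this below $1$ requires $\mu\le\bigl(\tfrac{1-p}{p}\bigr)^2$ --- exactly the public likelihood ratio at the moment a cascade is initiated by two net revealed majority signals --- together with a proof that the convergence is from below; the bound $\mu<\tfrac{1-p}{p}$ that you invoke is too weak by a full factor of $\tfrac{p}{1-p}$ and cannot close the gap. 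So the hardest case of the lemma is genuinely missing from your proposal. For what it is worth, the paper disposes of this case only through the asserted inequality $\ell_{G_1}^{a^{k+1}-i+1}\ge\ell_{G_i}^{a^{k+1}}$, justified ``because of the allowed transitions''; that justification is convincing for the monotone climb into a transient state but not for an absorbing state reachable by down-moves, so your instinct about where the real difficulty lies is correct --- but a complete proof must actually carry out that step.
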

\vspace{-6pt}        
        To show Lemma \ref{lem2}, we first need to guarantee there exists at least one silent agent between any pair of active agents, i.e., active agents cannot arrive consecutively. The idea of the proof is that once an active agent fails to stop a cascade, it either gets an observed-majority private signal or the cascade will continue whatever private signal she gets. Then simple algebra rules out the possibility of consecutive active agents: see Claim \ref{lem1} in Appendix \ref{pf:lem1}. Now, given the current active agent $a^k$, if the next active agent indexed $a^{k+1}$ can stop cascade at state $G_i$ for some $i>1$, then agent $a^{k+1}-i+1$ also has the ability to stop the cascade at $G_1$, which contradicts the fact that $a^{k+1}$ is the next active agent. The detailed proof is in Appendix \ref{pf:lem2}.
        
        Since the QGB we constructed is a threshold-based QGB and active agents stop a cascade only in $G_1$, Lemma \ref{lem2} guarantees incentive compatibility.
   
   \vspace{2pt}      
\noindent \textbf{Necessary and sufficient conditions for asymptotic learning in threshold-based question guidebooks}
    In order to show that the proposed QGB can achieve asymptotic learning, we provide a necessary and sufficient condition to achieve asymptotic learning for every threshold-based QGB.

	\begin{dfn} \vspace{-6pt} 
	Let $h^+(m)$ be the probability that the wrong cascade will be stopped when $m^k=m$, where $m^k$ is the number of silent agent between the $(k-1)^{th}$ and $k^{th}$ active agents. Similarly, $h^-(m)$ represents the probability that the right cascade will be stopped when $m^k=m$.
	\end{dfn}

    \begin{lem} \label{lem3} 
    Given a threshold-based question guidebook $Q$ that is operational in a cascade. The following three conditions are necessary and sufficient for the question guidebook to achieve to achieve the asymptotic learning:
    \begin{itemize}
    \vspace{-6pt}
    \item[1] $\lim_{k\rightarrow\infty} m^k=\infty$;
    \vspace{-6pt}
    \item[2] The growth rate of $m^k$ satisfies $\Pi_{k=1}^\infty(1-h^-(m^k))>\Pi_{k=1}^\infty(1-h^+(m^k))=0$;
    \vspace{-6pt}
    \item[3] The transition matrix $M^*=M_sM_a$ is irreducible, where $M_s$ is the transition matrix for silent agents and $M_a$ is the transition matrix for active agents who receive observed majority signals.
    \end{itemize}
\end{lem}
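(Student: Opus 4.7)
My plan is to prove sufficiency and necessity separately, using the Markov-chain representation of the QGB from Section~\ref{InfoSet} together with a renewal-style argument over the sequence of active agents $\{a^k\}_{k\ge 1}$.

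\textbf{Sufficiency.} Assume conditions 1--3 hold and condition on $\Theta=A$. By Lemma~\ref{lem2}, each active agent can break the current cascade only at state $G_1$, and by the definition of $h^\pm$ the conditional probability of breaking at the $k$-th active agent equals $h^+(m^k)$ in a wrong cascade and $h^-(m^k)$ in a correct one. The strong Markov property of the information-set chain $\mathcal{I}_n$ at the active-agent times yields conditional independence across the sequence of break attempts, so the probability that all active agents fail to break a given wrong cascade equals $\prod_k(1-h^+(m^k))$, which is $0$ by condition 2; every wrong cascade is thus a.s.\ broken in finitely many active steps. Dually, each correct cascade persists forever with probability at least $\alpha := \prod_k(1-h^-(m^k))>0$. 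I would then knit these facts together with a renewal argument across cascade-break epochs: once a wrong cascade is broken, the break-action followed by subsequent private signals a.s.\ initiates a new cascade in finitely many agents; condition 3 guarantees that the information-set chain keeps visiting $G_1$ so that active agents in the new cascade retain nontrivial break probabilities; and at each fresh cascade there is a uniform lower bound $\beta>0$ on the probability of entering (and then remaining in) a correct cascade. Infinitely many such attempts succeed almost surely, so the system a.s.\ eventually locks into a persistent correct cascade, giving $\lim_{n\to\infty}\mathbb{P}(X_n=\bar A\mid\Theta=A)=1$. Symmetry handles $\Theta=B$, establishing asymptotic learning.

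\textbf{Necessity.} I would prove the contrapositive by handling each failed condition in turn. (i) If $m^k\not\to\infty$, extract a subsequence $m^{k_j}\le C$; the structure of $h^-$ gives $h^-(m^{k_j})\ge\delta>0$, hence $\prod_j(1-h^-(m^{k_j}))=0$ and every correct cascade is a.s.\ broken, precluding convergence of $\mathbb{P}(X_n=\bar A\mid\Theta=A)$ to $1$. (ii) If $\prod_k(1-h^+(m^k))>0$, the initial wrong cascade survives all active agents with strictly positive probability, directly contradicting learning on that event. (iii) If $\prod_k(1-h^-(m^k))=0$, every correct cascade also breaks a.s.; coupling the resulting evolution with a recurrent oscillation between cascade types shows that $X_n$ cannot concentrate on the true state. (iv) If $M^*$ is reducible, there exists an initial information-set configuration from which $G_1$ is unreachable; by Lemma~\ref{lem2} no active agent from that configuration can break the cascade, so on the positive-probability event of initiating a wrong cascade from such a configuration the cascade persists forever.

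\textbf{Main obstacle.} The hardest step is making the renewal argument in the sufficiency direction rigorous. I must carefully show that after each wrong-cascade break the system a.s.\ enters some new cascade in finitely many agents, that the $m^k$ statistics of each new cascade inherit the product conditions uniformly, and that the success probabilities across renewals admit a uniform positive lower bound (so that the Borel--Cantelli-type conclusion applies). Condition 3 is doing the real work here by ruling out absorbing substructures of $G$ that would trap the chain away from $G_1$ and sabotage the renewal cycle. Once this bookkeeping is set up, the remainder reduces to standard strong-Markov and infinite-product estimates.
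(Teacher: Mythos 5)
Your proposal is correct and follows essentially the same route as the paper's proof: necessity is argued condition-by-condition via the contrapositive (bounded $m^k$ forces correct cascades to break, a positive product for $h^+$ lets a wrong cascade survive, a zero product for $h^-$ kills every correct cascade, and reducibility of $M^*$ traps the chain away from the stopping state), while sufficiency combines almost-sure breaking of wrong cascades with positive-probability survival of correct ones, using irreducibility to keep $G_1$ reachable, and closes with a renewal argument over cascade epochs. Your write-up is in fact more explicit than the paper's about the conditional-independence and uniform-lower-bound bookkeeping needed to make the renewal step rigorous, which the paper also leaves informal.
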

\vspace{-6pt} 
    The first condition makes the frequency of active agents to go to 0 as time goes to infinity, otherwise asymptotic learning cannot be achieved\footnote{If we have a non-zero proportion of agents that take actions according to their private signals, the probability of the right action is upper bounded away from 1.}. Furthermore, we want every wrong cascade to be stopped with probability 1, but also need the right cascade to have a positive probability to last forever. This constrains the maximum and minimum growth rate of $m^k$, which is discussed in the second condition. The last condition guarantees that in a cascade, with an arbitrary number of silent agents followed by an active agent, all states can be visited. Without this condition, the QGB cannot correct all kinds of the wrong cascades and no learning is achievable. 

\vspace{-6pt}    
\subsection{Asymptotic Learning is achieved (Sketch of proof of Theorem 1)} 
\vspace{-6pt}
    Since the last condition in Lemma \ref{lem3} is trivially satisfied in the designed QGB, most of proof is on verifying the first two conditions in Lemma \ref{lem3}. For this we have to analyze the growth rate of $m^k$ thoroughly. This following paragraphs will first characterize the form of $m^k$, then study upper and lower bounds of its growth rate. With those bounds in hand, calculations can be done to show the wrong cascade will be stopped almost surely and the right state will last forever with positive probability (which can be lower-bounded by some constant).

\noindent \textbf{Form of number of silent agents}  
    To study the growth rate of $m^k$, we need to know exactly how many of agents between each pair of consecutive active agents. 
       
    Prior to this, we need to specify the functions $h^+(m)$ and $h^-(m)$ first. With threshold $K=2$, two consecutive majority signals in a cascade will continue the cascade at the next active agent. Simple combinatorics yields $h^+(m)$ and $h^-(m)$ as follows:
  \vspace{-6pt}  \begin{eqnarray}\label{eq:hform}
 \qquad h^+(m)=\sum_{i=0}^{m/2} {m-i-1 \choose i}p^{m+1-i}(1-p)^i; \qquad
    h^-(m)=\sum_{i=0}^{m/2} {m-i-1 \choose i}p^i(1-p)^{m+1-i}. 
\end{eqnarray}     
Furthermore, with Lemma \ref{lem2}, we know that the likelihood ratio of $B$ versus $A$
	%\footnote{observed minority versus majority}
	at state $G_1$ is the only parameter that the next active agent needs to compute. Suppose we know the likelihood ratio of the first silent agent right after $a^k$, the next active agent $a^{k+1}$ is the first agent who could have likelihood ratio at $G_1$ crossing $1$. Since $h^+(m)$($h^-(m)$) is the probability that an agent and her $m^{th}$ successor are both at $G_1$ conditional on the right(wrong) cascade not yet stopped. The ratio of $h^+(m)$ over $h^-(m)$ is the likelihood ratio of $B$ versus $A$ conditioned on the event that the cascade continues after $m$ silent agents. Hence, by definition of a silent agent, if the agent with index $a^{k-1}+m$ is silent, she must have likelihood ratio at $G_1$, $\ell_{G_1}^{H_{a^{k-1}+1}}\frac{h^+(m)}{h^-(m)}<1$. Given the fact that $\frac{h^+(m)}{h^-(m)}$ is an strictly increasing function of $m$ for all $p>0.5$, the number of silent agents $m^k$ between active agents $a^k$ to $a^{k+1}$ can be mathematically defined as:
	\vspace{-6pt}
	\begin{eqnarray} \label{eq:lklratio}
	 \qquad \qquad \qquad m^k \equiv \min \{m|\ell_{G_1}^{H_{a^k+1}}\frac{h^+(m)}{h^-(m)}\geq 1\}. \vspace{-6pt}
	\end{eqnarray}	
	
	\vspace{-6pt} Then, since every active agent failing to stop the cascade has only one information set corresponding to $G_1$, there is a simple recursive form of likelihood ratio at state $G_1$ between first silent agents after $k^{th}$ and $(k+1)^{th}$ active agents by using the ratio of probability that the cascade continuous. The recursive form of $\ell_{G_1}^{H_{a^{k+1}+1}}$ a strictly decreasing\footnote{See Appendix \ref{pf:clm2} for the proof.} function of $k$, is given by:
	\vspace{-6pt}
	\begin{equation} \label{eqn:ldecrease}
	    \ell_{G_1}^{H_{a^{k+1}+1}}=\ell_{G_1}^{H_{a^{k}+1}}\frac{1-h^+(m^k)}{1-h^-(m^k)}
	\end{equation} 
	
	\vspace{-12pt} With functions $h^+(m)$, $h^-(m)$, and \eqref{eqn:ldecrease}, $m^k$ is non-decreasing and can be computed iteratively.

\noindent \textbf{Upper bound of growth rate of $m^k$}
    	If asymptotic learning can be achieved under this QGB, every wrong cascade must be stopped, i.e., $\Pi_{k=1}^{\infty} (1-h^+(m^k))=0$. Thus, if we can find a sequence $w^k \geq m^k,~\forall k$, and $\Pi_{k=1}^{\infty} (1-h^+(w^k))=0$, then $h^+(m^k)\geq h^+(w^k)$ guarantees $\Pi_{k=1}^{\infty} (1-h^+(m^k))=0$. %In a nutshell, we want to find an upper bound of the growth rate of $m^k$.
    	Finding the upper bound of the growth rate of $m^k$ is equivalent to finding the lower bound of sequence $\underbar{n}_s$ such that $m^{k-1}=s-1$ and $m^{k+i-1}=s$ for all $i\leq \underbar{n}_s$. Now, suppose $m^{k}=s$, $m^{k-1}=s-1$, we can calculate $\underbar{n}_s$ (See detailed calculations in \ref{calns}) to get 
    	\vspace{-6pt}\begin{eqnarray} \label{eqn_lowbdd}
    \left|\{m^t=s|\ell_{G_1}^{H_{a^{t-1}+1}}\frac{h^+(s)}{h^-(s)}\geq 1\}\right| \geq \underbar{n}_s > \ln\Big(\frac{h^+(s)h^-(s-1)}{h^-(s)h^+(s-1)}\Big)  \geq \frac{\ln(c_1(p))}{p^s+p^{2s}},
    	\end{eqnarray}
where $c_1(p)$ is only a function of $p$.    	
	
%	First, we want a lower bound of $\dfrac{1-h^+(x)}{1-h^-(x)}$. With calculations detailed in Appendix \ref{hlowerbound}, we know that $\dfrac{1-h^+(x)}{1-h^-(x)}>1-h^+(x)>1-p^x$. 
	
%   Using the result that $\frac{\ell_{G_1}^{H_{a^{k+1}+1}}}{\ell_{G_1}^{H_{a^{k}+1}}}\geq 1-p^{m^k}$ and $\dfrac{h^+(m+1)h^-(m)}{h^+(m)h^-(m+1)}> \frac{p+\sqrt{p(4-3p)}}{1-p+\sqrt{1+2p-3p^2}}-\epsilon\geq c_1(p)>1$ for all $m>K$ calculated in Appendix \ref{hh-converge}, where $c_1(p)$ is only a function of $p$, we can get the following inequality:
%\begin{eqnarray} \label{eqn_lowbdd}
%	\underbar{n}_s\geq \frac{\ln(c_1(p))}{-\ln(1-p^s)}\geq \frac{\ln(c_1(p))}{p^s+p^{2s}}
%\end{eqnarray}   
\vspace{2pt}
\noindent \textbf{Wrong cascade will be stopped almost surely}
	Taking the inequality (\ref{eqn_lowbdd}) into the condition 2 in Lemma \ref{lem3}, the probability that a wrong cascade will be stopped is\footnote{See Appendix \ref{subsec:wc_stop} for detailed calculation}
  \begin{eqnarray}
& \mathbb{P}(\text{A wrong cascade can be stopped})  =  1-\Pi_{k=1}^{\infty} (1-h^+(m^k))  \geq  1-\Pi_{k=1}^{\infty} (1-p^{m^k}) \nonumber\\
  &\geq 1-\Pi_{k=1}^{\min\{j|m^{j+1}=\underbar{J}\}} (1-p^{m^k})\exp\Big(-p\ln(c_1(p))\sum_{n=\underbar{J}}^{\infty} \frac{1}{1+p^n}\Big) =1
  \end{eqnarray}

\noindent \textbf{Lower bound of growth rate of $m^k$ and the probability of stopping a right cascade}
	Similarly, to show that $\mathbb{P}(\text{a right cascade will be stopped})<1$, we can show that a lower bound of $\Pi_{k=1}^{\infty} (1-h^-(m^k))$ is positive. Thus, we now need a lower bound of the growth rate of $m^k$. Using a similar technique as for \eqref{eqn_lowbdd}, we derive an opposite inequality for $\bar{n}_s$ in Appendix~\ref{lowns}, and then the probability that a right cascade will be stopped is
%    \begin{eqnarray}
%    	\bar{n}_s&\leq&|\{m^t=s|\ell_{G_1}^{H_{a^{t-1}+1}}\frac{h^+(s)}{h^-(s)}\geq 1\}|\\
%       	&\leq&\min\bigg\{ n|n \ln\Big(\frac{1-h^-(s)}{1-h^+(s)}\Big)> \ln\Big(\frac{h^+(s)}{h^-(s)}\Big)+\ln(\ell_{G_1}^{H_{a^{k}+1}})-\ln(\ell_{G_1}^{H_{a^{k}+1}})-\ln\Big(\frac{h^+(s-1)}{h^-(s-1)}\Big)\bigg\} \nonumber \\&&	\\
%        &=&\min\Big\{n|n \ln\Big(\frac{1-h^-(s)}{1-h^+(s)}\Big)> \ln\Big(\frac{h^+(s)h^-(s-1)}{h^-(s)h^+(s-1)}\Big)\Big\} 	
%    	\end{eqnarray} 
%  Similarly, we can find the lower bound of $\ln\big(\frac{1-h^-(s)}{1-h^+(s)}\big)$ and the upper bound of $(1-h^-(s))$. Using the result of \ref{hh-converge} and the calculation in \ref{hh-upperbd}, we can derive the probability that a right herding will be stopped.
%  \begin{eqnarray}
%    & & \mathbb{P}(\text{a right cascade will be stopped}) =1-\Pi_{k=1}^{\infty} (1-h^-(m^k)) \nonumber\\
%    &\leq&  1-\exp\Big(-\ln(c_2(p))(c_3(p)+\sum_{n=K}^{\infty}\dfrac{1}{
%    c_4(p)c_1(p)^n-1})\Big), \label{lasteqn} 
%  \end{eqnarray} 
 \vspace{-6pt}
  \begin{equation}\label{lasteqn}  
 	\mathbb{P}(\text{A right cascade will be stopped})\leq  1-B(p)\exp\Big(\sum_{n=\bar{J}}^{\infty}\dfrac{1}{
    c_4(p)c_1(p)^n-1}\Big), 
  \end{equation} 
where $B(p)=c_2(p)e^{c_3(p)}\Pi_{k=1}^{\min\{j|m^{j+1}=\bar{J}\}} (1-h^-(k)).$ Since $\sum_{n=\bar{J}}^{\infty}\tfrac{1}{
    c_4(p)c_1(p)^n-1}$ converges by the ratio test, the RHS of \eqref{lasteqn} is less than $1$, so that a right cascade can last forever with a positive probability.

  Now, every wrong cascade will be stopped but there is a positive probability that a right cascade will continue forever, so the second condition is satisfied. Furthermore, since the lower bound and upper bound of growth rate $\bar{n}_s$ suggest a finite interval of $m^k=s$, $m^k$ goes to infinity as $k$ goes to infinity, so the first condition also holds. Thus, asymptotic learning can be achieved under this QGB. Once we're out of a cascade, agents use their private signals and that will initiate another cascade with a bias towards a right cascade. Nevertheless, every wrong cascade will be stopped in finite time and an unstoppable right cascade happens after finitely many stopped right cascades. Thus, the learning happens \emph{almost surely} instead of just \emph{in probability}, as in Definition~\ref{def:AL}, so that in finite time learning occurs.
\vspace{-12pt}

\section{Conclusion}
We have shown that in the sequential social learning model of BHW \cite{BHW1992}, agents avoid wrong cascades and achieve asymptotic learning by asking one well designed binary question to the preceeding agent.  To do this, we develop a question guidebook that the agents can use to ask questions such that the agent can always answer the question, and agents always ask a question that best serves their self interest. Determining the distribution of the time of learning or finding the best question guidebook to minimize some statistic of the time of learning is for future work. Generalizing from binary private signals to other discrete signals and then to the framework of \cite{smith2000}, or to consider multiple states of nature \cite{sorensen1996rational} is also for future work. % We leave it open as to what the optimal rate of learning that is achievable in this setting.  Another question of interest is what the optimal worst-case bound for communication needed to achieve perfect learning---each agent always plays the optimal type based on the private information of all preceding agents.
	
\bibliographystyle{IEEEtran}
\bibliography{ref_social_learning}
\newpage
\appendix
\section{Appendix: Proofs and Calculations} 

\subsection{Proof of Lemma 1}\label{pf:lem2}
        \begin{proof}
	Prior to showing Lemma 1, we want to first claim a useful property in threshold-based question guidebooks.
	\begin{clm} \label{lem1}
            Given a threshold-based question guidebook which is feasible and incentive-compatible, active agents cannot arrive consecutively.
	\end{clm}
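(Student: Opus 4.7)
The plan is to proceed by contradiction. Assume agents $n$ and $n+1$ are both active in the same $\bar A$-cascade (the $\bar B$-cascade case is symmetric).

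The decisive structural observation, which comes directly from the threshold-based assumption on $Q$, is that an active agent who fails to stop the cascade always transitions to $G_1$. Regardless of what happens at agent $n$ (a majority signal, or a minority signal together with a predecessor not at $G_1$), if the cascade continues past her, her state is $G_1$ with probability $1$. Consequently, when agent $n+1$ queries agent $n$, the answer is predetermined to be ``at $G_1$'': the query contributes no information beyond the already-observed continuation of the cascade, and agent $n+1$'s decision depends solely on the updated history $H_{n+1}$ and her private signal $s_{n+1}$.

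For agent $n+1$ to have positive probability of stopping, her posterior likelihood ratio in favor of $B$ must reach $1$ using the signal alone. The best case is $s_{n+1}=B$, which multiplies the ratio by the maximum factor $\tfrac{p}{1-p}$, yielding the necessary condition
\[
  \ell_{H_{n+1}}\,\cdot\,\frac{p}{1-p}\;\ge\;1,\qquad\text{i.e.,}\qquad \ell_{H_{n+1}}\;\ge\;\frac{1-p}{p}.
\]
On the other hand, the $\bar A$-cascade regime is characterised precisely by the failure of the private signal to flip the BHW-optimal action $\bar A$; equivalently, every in-cascade agent $m$ obeys $\ell_{H_m}<\tfrac{1-p}{p}$. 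Since $n+1$ is active, the cascade is still ongoing at her turn, so $\ell_{H_{n+1}}<\tfrac{1-p}{p}$, which directly contradicts the displayed inequality.

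The main thing to get right is the ``collapse to $G_1$'' step, which is really just unpacking the threshold-based definition and checking that the two exhaustive sub-cases of continuation both land agent $n$ in $G_1$. After that, the argument reduces to comparing two inequalities on $\ell_{H_{n+1}}$. A small caveat I will flag explicitly is the equivalence between ``in cascade'' and the analytic condition $\ell_{H_m}<\tfrac{1-p}{p}$, which follows from the BHW characterisation of cascades together with the common-knowledge status of the QGB, and which is what turns the incentive-compatibility constraint into a strict numerical bound on $\ell_{H_{n+1}}$.
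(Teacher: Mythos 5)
Your overall structure is the same as the paper's: the decisive fact is that a threshold-based QGB sends every active agent who fails to stop the cascade to $G_1$ deterministically, so the successor's query is worthless and her decision rests on $H_{n+1}$ and $s_{n+1}$ alone, whence she needs $\ell_{H_{n+1}}\ge\frac{1-p}{p}$ to be active. The one place where your argument is thinner than it should be is exactly the step you flag: the assertion that ``every in-cascade agent $m$ obeys $\ell_{H_m}<\frac{1-p}{p}$.'' The BHW characterisation gives this for the raw action history at the moment the cascade is initiated, but here $\ell_{H_{n+1}}$ is the posterior under the common-knowledge QGB, which also conditions on the event that agent $n$ (and all earlier active agents) failed to stop. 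You need to rule out that this extra conditioning pushes the ratio up to or past $\frac{1-p}{p}$. It cannot — each failure to stop multiplies $\ell_H$ by $\frac{1-\mathbb{P}(\text{stop}\mid B)}{1-\mathbb{P}(\text{stop}\mid A)}<1$, since by incentive compatibility stopping is strictly more likely under $B$ — but that monotonicity is a claim about the QGB-informed posterior, not a consequence of the BHW cascade definition, and it is precisely the content the paper supplies by splitting the event ``predecessor at $G_1$'' into the two sub-cases (active at a stoppable state but receiving a majority signal; at an unstoppable state) and bounding the conditional likelihood ratio by $\frac{1-p}{p}$ in each before averaging.

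So: right approach, same key inequality and same threshold $\frac{1-p}{p}$ as the paper, but the inequality $\ell_{H_{n+1}}<\frac{1-p}{p}$ is asserted where it needs either the paper's case decomposition or the one-line monotonicity argument above. Add that line and the proof is complete.
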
     
\begin{proof} \label{pf:lem1}
       First we note that active agents arriving means that the cascade is still ongoing and the question guidebook operational. Then assume that active agent $a^k$ is in state $G_j \in \mathcal{G}^*$, where $\mathcal{G}^*$ is defined as the set of states at which she can stop the cascade.
            
            We know for sure that the likelihood ratio of agent $a^k+1$ at state $G_1$, denoted by $\ell_{G_1}^{a^k+1}$ is less than $\frac{1-p}{p}$.  The reason is that $G_1$ at $a^k+1$ contains two types of events. The first types is where $a^k$ is in $G_k \in \mathcal{G}^*$ but receiving an observed majority. Given $\ell_{G_i}^{a^k}<1$ for all $G_i\in \mathcal{G^*}$, $(\ell_{G_1}^{a^k+1}|a^k \text{ at } G_i \in \mathcal{G^*})<\frac{1-p}{p}$; note that agent $a^k$ does not stop the cascade. The other type is $a^k$ was at $G_i\not\in \mathcal{G}^*$. Here also we get $(\ell_{G_1}^{a^k+1}|a^k \text{ at } G_i \not\in \mathcal{G^*})<\frac{1-p}{p}$ because $a^k$ cannot stop the cascade even after receiving observed minority in any of those states. 
            
            Therefore, we can conclude that $\ell_{G_1}^{a^k+1}<\frac{1-p}{p}$, which guarantees that agent $a^k+1$ is a silent agent, and not an active agent.
\end{proof}        
       Given that every active agent goes to $G_1$ once knowing she cannot stop the cascade, Claim \ref{lem1} tells that the next agent must be a silent agent.
       With the claim, we can prove this lemma by contradiction. Starting from $G_1$, we assume the next active agent is $a^{k+1}$ who can stop the cascade at $G_i$ for some $i>1$. Now, consider that likelihood ratio of state $G_1$ at agent $a^{k+1}-i+1$, $\ell_{G_1}^{a^{k+1}-i+1}$, $\ell_{G_1}^{a^{k+1}-i+1}\geq \ell_{G_i}^{a^{k+1}}$; this is because of the allowed transitions in the Markov chain of silent agents. Agent $a^{k+1}$ can stop the cascade at $G_i$ and this requires $\ell_{G_i}^{a^{k+1}}\geq \frac{1-p}{p}$. It implies that $\ell_{G_1}^{a^{k+1}-i+1}\geq \frac{1-p}{p}$. Now, agent $a^{k+1}-i+1$ must be an active agent, which contradicts that next active agent is $a^{k+1}$.
\end{proof}

\subsection{Proof of Lemma 2}\label{pf:lem3}

First, we show that if one of these three condition fails, then asymptotic learning is not achievable.

The check for the first condition is straightforward, if it fails, then there exists an $N$ such that $m^k<N$ for all $k$. Now, $h^-(m^k)$ is lower-bounded by $h^-(N)>0$. As we all know, $\lim_{n \to \infty}(1-h^-(N))^n=0$, a correct cascade will always being stopped in finite time and asymptotic learning is not possible.

For the second condition, if $\Pi_{k=1}^\infty(1-h^+(m^k))=0\neq 0$, then there is a positive probability that a wrong cascade lasts forever. Obviously, asymptotic learning cannot be achieved. If $\Pi_{k=1}^\infty(1-h^-(m^k))=\Pi_{k=1}^\infty(1-h^+(m^k))=0$, then we will keep stopping every cascade whether it is a right cascade or not. Therefore, $\lim_{n\rightarrow\infty} \mathbb{P}(X_n=\bar{X}_n)<1$ because we will have a positive probability of either being in a wrong cascade or not being in a cascade.

Finally, if the product of transition matrix $M^*=M_sM_a$ is not irreducible, then there are some states that we either cannot access or from which we cannot go back to any other state. It implies that we cannot stop cascades at those states. Hence, it is necessary that $M^*$ is irreducible to achieve asymptotic learning.

In the other direction, suppose we have the first and the second conditions satisfied, then we know we will keep stopping some wrong cascades but not the correct cascades. With the third condition, we know that if a cascade, whether it's correct or wrong, isn't stopped by the upcoming active agent, then it has a positive probability of being stopped by at least one of the next $|G|$ active agents, because the directed graph corresponds to $M*$ is strongly connected. Now, since all wrong cascades can be stopped in finite time but some of right cascade will last forever, then asymptotic learning is achieved.

\subsection{Upper bound of growth rate of $m^k$} \label{calns}
	Finding the upper bound of $m^k$ is analogous to finding a sequence of $\underbar{n}_s$ such that 
	$$\left|\left\{m^k=s|\ell_{G_1}^{H_{a^{k-1}+1}}\frac{h^+(s)}{h^-(s)}\geq 1\right\}\right|\geq \underbar{n}_s \; \forall\ s\in \mathbb{N}.$$ The following calculations derive the value of $\underbar{n}_s$.
	
	First, the number of $m^k=s$ is equivalent to difference of indices between the $k^{th}$ active agent to the next active agent with index $j$ such that $\ell_{G_1}^{H_{a^{j}+1}}\frac{h^+(s)}{h^-(s)}< 1$. Hence, we get the following equation:   
    	\begin{align*}
\left|\left\{m^t=s|\ell_{G_1}^{H_{a^t+1}}\frac{h^+(s)}{h^-(s)}\geq 1, t\geq k\right\}\right|=\min\{n|\ell_{G_1}^{H_{a^{k+n}+1}}\frac{h^+(s)}{h^-(s)}< 1\}.
\end{align*}
 	Now, using the recursive formula of likelihood ratio stated in \eqref{eqn:ldecrease}, we get the following equation:
\begin{align*}
\left|\left\{m^t=s|\ell_{G_1}^{H_{a^t+1}}\frac{h^+(s)}{h^-(s)}\geq 1, t\geq k\right\}\right|=\min\left\{n|\ell_{G_1}^{H_{a^{k}+1}}\Big(\frac{1-h^+(s)}{1-h^-(s)}\Big)^n\frac{h^+(s)}{h^-(s)}< 1\right\}.
\end{align*}

We know that $\ell_{G_1}^{H_{a^{k-1}+1}}\frac{h^+(s)}{h^-(s)}>1$ because $m^k=s$ and $m^{k-1}=s-1$, so taking logarithms we obtain
\begin{align*}
    	&\quad \left|\left\{m^t=s|\ell_{G_1}^{H_{a^t+1}}\frac{h^+(s)}{h^-(s)}\geq 1, t\geq k\right\}\right|=\min\left\{n|n \ln\Big(\frac{1-h^-(s)}{1-h^+(s)}\Big)> \ln\Big(\frac{h^+(s)}{h^-(s)}\Big)+\ln(\ell_{G_1}^{H_{a^{k}+1}})\right\} \\
    	&\geq\min\left\{n|n \ln\Big(\frac{1-h^-(s)}{1-h^+(s)}\Big)> \ln\Big(\frac{h^+(s)}{h^-(s)}\Big)+\ln(\ell_{G_1}^{H_{a^{k}+1}})-\ln(\ell_{G_1}^{H_{a^{k-1}+1}})-\ln\Big(\frac{h^+(s-1)}{h^-(s-1)}\Big)\right\}   	
    	\end{align*}

Using the recursive form in \eqref{eqn:ldecrease} once again, we get
\begin{align*}
	&\quad \left|\left\{m^t=s|\ell_{G_1}^{H_{a^t+1}}\frac{h^+(s)}{h^-(s)}\geq 1, t\geq k\right\}\right|       	\\
       	&\geq \min\left\{n|n \ln\Big(\frac{1-h^-(s)}{1-h^+(s)}\Big)> \ln\Big(\frac{h^+(s)h^-(s-1)}{h^-(s)h^+(s-1)}\Big)+\ln\Big(\frac{1-h^+(s-1)}{1-h^-(s-1)}\Big)\right\} \\
        &\geq \min\left\{n|(n+1) \ln\Big(\frac{1-h^-(s)}{1-h^+(s)}\Big)> \ln\Big(\frac{h^+(s)h^-(s-1)}{h^-(s)h^+(s-1)}\Big)\right\} 	
    	\end{align*}
From this point onwards, we need to compute a lower bound of $\ln\Big(\frac{h^+(s)h^-(s-1)}{h^-(s)h^+(s-1)}\Big)$, and an upper bound of $\ln\Big(\frac{1-h^-(s)}{1-h^+(s)}\Big)$. For ease of reading we put the derivations of the calculations of the lower bound of $\ln\Big(\frac{h^+(s)h^-(s-1)}{h^-(s)h^+(s-1)}\Big)$ in Appendix~\ref{hh-converge}, and the upper bound of $\ln\Big(\frac{1-h^-(s)}{1-h^+(s)}\Big)$ using the Claim \ref{clm3} in Appendix~\ref{pf:clm3} (by using the fact that $\ln\Big(\frac{1-h^-(s)}{1-h^+(s)}\Big)<\ln\Big(\frac{1}{1-h^+(s)}\Big)$ first).

	Combining the result in Appendix\ref{hh-converge}, which shows $\ln\Big(\frac{h^+(s)h^-(s-1)}{h^-(s)h^+(s-1)}\Big)\geq c_1(p)$, where $c_1(p)$ is only a function of $p$, and the result in Appendix~\ref{pf:clm3}, $\ln(\dfrac{1-h^-(s)}{1-h^+(s)})\leq p^s+p^{2s}$, we can pick $\underbar{n}_s$ as
	\begin{eqnarray}
	\underbar{n}_s= \frac{\ln(c_1(p))}{p^s+p^{2s}}.  
	\end{eqnarray}

\subsection{Calculation of the probability that a wrong cascade will be stopped}\label{subsec:wc_stop}

 First, we use the lower bound of $h^+(m)$ derived in Appendix~\ref{hlowerbound} to get the first inequality
  \begin{eqnarray}
  \mathbb{P}(\text{A wrong cascade can be stopped}) =1-\Pi_{k=1}^{\infty} (1-h^+(m^k)) \geq  1-\Pi_{k=1}^{\infty} (1-p^{m^k}). \nonumber
  \end{eqnarray} 
  Now, we use the sequence $\underbar{n}_s$ that upper bounds the growth rate of $m^k$ to replace the product and apply the lower-bound result in \ref{hh-converge} for large enough $\underbar{J}$. 
  \begin{eqnarray*}
  \mathbb{P}(\text{A wrong cascade can be stopped})&\geq& 1-A(p)\Pi_{n=\underbar{J}}^{\infty} (1-p^n)^{\frac{\ln(c_1(p))}{p^{n-1}(1+p^{n-1})}} \\
    &\geq& 1-A(p)\exp\Big(p\sum_{n=0}^{\infty} \frac{\ln(c_1(p))}{p^n(1+p^{n-1})}(-p^n)\Big) \\
  &=&  1-A(p)\exp\Big(-p\ln(c_1(p))\sum_{n=0}^{\infty} \frac{1}{1+p^n}\Big) =1,
    \end{eqnarray*} 
  where $A(p)=\Pi_{k=1}^{\min\{j|m^{j+1}=\underbar{J}\}} (1-p^{m^k})$.

\subsection{Lower bound of growth rate of $m^k$} \label{lowns}
	Finding the lower bound of $m^k$ is analogous to finding a sequence of $\bar{n}_s$ such that $|\{m^k=s|\ell_{G_1}^{H_{a^{k-1}+1}}\frac{h^+(s)}{h^-(s)}\geq 1\}|\geq \bar{n}_s$ for all $s\in \mathbb{N}$. In contrast to what we did in computing $\underbar{n}_s$, we now want to upper bound the following equation:
	
\begin{align*}
\left|\left\{m^t=s|\ell_{G_1}^{H_{a^t+1}}\frac{h^+(s)}{h^-(s)}\geq 1, t\geq k\right\}\right|=\min\left\{n|\ell_{G_1}^{H_{a^{k}+1}}\Big(\frac{1-h^+(s)}{1-h^-(s)}\Big)^n\frac{h^+(s)}{h^-(s)}< 1\right\}.
\end{align*}

Given that $m^k=s$ and $m^{k-1}=s-1$, $\ell_{G_1}^{H_{a^{k}+1}}\frac{h^+(s-1)}{h^-(s-1)}<1$ because $m^k=s$ and $m^{k-1}=s-1$, we can lower bound the equation and then take logarithms on both sides of the inequality to get
\begin{align*}
    	&\quad \left|\left\{m^t=s|\ell_{G_1}^{H_{a^t+1}}\frac{h^+(s)}{h^-(s)}\geq 1, t\geq k\right\}\right|=\min\left\{n|n \ln\Big(\frac{1-h^-(s)}{1-h^+(s)}\Big)> \ln\Big(\frac{h^+(s)}{h^-(s)}\Big)+\ln(\ell_{G_1}^{H_{a^{k}+1}})\right\} \\
    	&\leq \min\bigg\{ n|n \ln\Big(\frac{1-h^-(s)}{1-h^+(s)}\Big)> \ln\Big(\frac{h^+(s)}{h^-(s)}\Big)+\ln(\ell_{G_1}^{H_{a^{k}+1}})-\ln(\ell_{G_1}^{H_{a^{k}+1}})-\ln\Big(\frac{h^+(s-1)}{h^-(s-1)}\Big)\bigg\} \\
    	&=\min\Big\{n|n \ln\Big(\frac{1-h^-(s)}{1-h^+(s)}\Big)> \ln\Big(\frac{h^+(s)h^-(s-1)}{h^-(s)h^+(s-1)}\Big)\Big\} 	    	
    	\end{align*}
	Similarly, we can replace $ \ln\Big(\frac{h^+(s)h^-(s-1)}{h^-(s)h^+(s-1)}\Big)$ by an upper bound $c_2(p)$ derived in Appendix \ref{hh-converge} for large enough $s$. However, unlike what we did for upper bounding the growth rate of $m^k$, now we keep the form $ \ln\Big(\frac{1-h^-(s)}{1-h^+(s)}\Big)$ and bring this into the calculation of the probability that a right cascade will being stopped. In other words, $$\bar{n}_s=\frac{\ln(c_2(p))}{\ln\big(\frac{1-h^-(n)}{1-h^+(n)}\big)}.$$

  \subsection{Calculation the probability that a right cascade will finally being stopped}\label{hh-upperbd}
	
	Taking the lower bound derived in Appendix~\ref{lowns}, we get the following inequality:
	\begin{align*}
	 \mathbb{P}(\text{A right cascade will be stopped})
	  &=1-\Pi_{k=1}^{\infty} (1-h^-(m^k)) \\
  &\leq 1-A(p)\Pi_{n=\bar{J}}^{\infty} (1-h^-(n))^{\frac{\ln(c_2(p))}{ln\big(\frac{1-h^-(n)}{1-h^+(n)}\big)}} \\
  &= 1-A(p)\exp\Big(\sum_{n=\bar{J}}^{\infty} \frac{\ln(c_2(p))}{\ln\big(\frac{1-h^-(n)}{1-h^+(n)}\big)}\ln(1-h^-(n))\Big),
	\end{align*}
	  where $A(p)=\Pi_{k=1}^{\min\{j|m^{j+1}=\bar{J}\}} (1-h^-(k))$, and $\bar{J}$ is the same as in Appendix \ref{subsec:wc_stop}.
	
	Now, the rest of calculations are just using well-known upper and lower bounds on the logarithm. With calculations detailed in \ref{boundrightcascade}, we get that
  \begin{align*}
 	\mathbb{P}(\text{A right cascade will be stopped})\leq  1-B(p)\exp\Big(\sum_{n=\bar{J}}^{\infty}\dfrac{1}{
    c_4(p)c_1(p)^n-1}\Big),
    \end{align*}
where $B(p)=c_2(p)e^{c_3(p)}\Pi_{k=1}^{\min\{j|m^{j+1}=\bar{J}\}} (1-h^-(k)).$

Now, using the fact that $c_1(p)> 1$ in Appendix \ref{hh-converge}, for a large enough $n$, $|\tfrac{c_4(p)c_1(p)^n-1}{c_4(p)c_1(p)^{n+1}-1}-c_1(p)|\leq \epsilon$. Hence, by ratio test, we can claim that  $\sum_{n=\bar{J}}^{\infty}\tfrac{1}{c_4(p)c_1(p)^n-1}$ converges, and 

$\mathbb{P}(\text{A right cascade will be stopped})<1$.

\subsection{Technical Claims and Calculations}
\subsubsection{Claim 2 and its proof} \label{pf:clm2}
\begin{clm}	
	The likelihood ratio $\ell^{H_{a^k+1}}_{G_1}$ is a strictly decreasing function of $k$.
\end{clm}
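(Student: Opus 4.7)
The plan is to derive the claim directly from the one-step recursion \eqref{eqn:ldecrease}, namely
\[
\ell_{G_1}^{H_{a^{k+1}+1}} = \ell_{G_1}^{H_{a^k+1}} \cdot \frac{1-h^+(m^k)}{1-h^-(m^k)}.
\]
Since each likelihood ratio is strictly positive (both $\Theta=A$ and $\Theta=B$ assign positive probability to $G_1$ at every step before a cascade is stopped), strict monotone decrease in $k$ reduces to showing that the multiplier $(1-h^+(m^k))/(1-h^-(m^k))$ is strictly less than $1$ for every $k$, which is in turn equivalent to the pointwise inequality $h^+(m^k) > h^-(m^k)$.

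To establish this pointwise inequality, I would inspect the explicit sums in \eqref{eq:hform} term by term. For every index $i$ with $0 \leq i \leq m/2$, the two sums share the binomial coefficient $\binom{m-i-1}{i}$ and differ only in the exponents of $p$ and $1-p$: the $h^+$-summand carries $p^{m+1-i}(1-p)^i$ while the $h^-$-summand carries $p^i(1-p)^{m+1-i}$. Since $p > 1/2 > 1-p$ and $m+1-2i \geq 1$ for every $i$ in the effective range where the binomial coefficient is nonzero, the $h^+$-summand strictly dominates the $h^-$-summand. At least one summand is nonzero whenever $m \geq 1$, so summing yields $h^+(m) > h^-(m)$ for all $m \geq 1$.

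The last step is simply to verify that $m^k \geq 1$ whenever the recursion is applied, which is precisely the content of Claim \ref{lem1}, since active agents cannot arrive consecutively. Combining these ingredients, the multiplier in \eqref{eqn:ldecrease} is strictly less than one at every step, so $\ell_{G_1}^{H_{a^k+1}}$ strictly decreases in $k$. I do not anticipate a genuine obstacle here: the recursion has already been derived and Claim \ref{lem1} already rules out the boundary case $m^k = 0$. A conceptually cleaner alternative would be to recognize the multiplier as the likelihood ratio of the event ``the cascade continues through the next $m^k$ silent agents and one active agent'' under the two states, which is immediately biased toward $\Theta = B$ by the signal asymmetry $p > 1/2$; that observation would collapse the term-by-term comparison to a single line.
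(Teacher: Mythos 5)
Your proof is correct, and it reaches the same pivotal inequality as the paper --- $h^+(m^k)>h^-(m^k)$, hence a multiplier strictly below $1$ in the recursion \eqref{eqn:ldecrease} --- but by a different route. The paper's proof is terser: it invokes the strict monotonicity of the ratio $h^+(m)/h^-(m)$ (established in Appendix~\ref{hh-converge} via closed forms for $h^{\pm}$) together with the base case $h^+(1)>h^-(1)>0$ and the definition \eqref{eq:lklratio} of $m^k$, from which $h^+(m)/h^-(m)\geq h^+(1)/h^-(1)>1$ for all $m\geq 1$. You instead compare the two sums in \eqref{eq:hform} term by term, observing that each pair of summands shares the coefficient $\binom{m-i-1}{i}$ and differs by the factor $(p/(1-p))^{m+1-2i}>1$ on the effective range of $i$. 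Your argument is more elementary and self-contained --- it needs none of the closed-form machinery from the appendix --- whereas the paper's version reuses a monotonicity fact it must establish anyway for the growth-rate bounds on $m^k$, so neither is wasted effort in context. Your appeal to Claim~\ref{lem1} to guarantee $m^k\geq 1$ is also the right way to close the boundary case, and your closing remark that the multiplier is itself the likelihood ratio of the continuation event (hence biased toward $B$ by $p>1/2$) is exactly the interpretation the paper gives to $h^+/h^-$ in the main text. The only point you leave implicit, as does the paper, is that $h^{\pm}(m)<1$ so that both $1-h^+(m^k)$ and $1-h^-(m^k)$ are positive; this is immediate since they are probabilities of non-certain events, but it is worth a clause.
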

\begin{proof}
First, $\frac{h^+(m)}{h^-(m)}$ is a strictly increasing function of $m$ (please see calculations in Appendix~\ref{hh-converge}). Since $h^+(1)>h^-(1)>0$ and given the form in \eqref{eq:lklratio}, $\ell^{H_{a^k+1}}_{G_1}$ is an strictly decreasing function.
\end{proof}

\subsubsection{Claim 3 and its Proof}\label{pf:clm3}

\begin{clm} \label{clm3}
$\ln\left(\dfrac{1-h^-(s)}{1-h^+(s)}\right)$ can be upper-bounded by $p^s+p^{2s}$.
\end{clm}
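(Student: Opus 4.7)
The plan is to reduce the two-argument log-ratio to a one-argument bound involving only $h^+(s)$, and then bound the resulting Taylor tail. Since $1-h^-(s) \leq 1$,
\[
\ln\!\left(\frac{1-h^-(s)}{1-h^+(s)}\right) \;\leq\; \ln\!\left(\frac{1}{1-h^+(s)}\right) \;=\; -\ln(1-h^+(s)),
\]
a step that matches the hint given in Appendix~\ref{calns}. Then, using the Taylor series for $h^+(s)\in[0,1)$,
\[
-\ln(1-h^+(s)) \;=\; h^+(s) + \frac{h^+(s)^2}{2} + \frac{h^+(s)^3}{3} + \cdots \;\leq\; h^+(s) + \frac{h^+(s)^2}{1-h^+(s)},
\]
so it suffices (for $s$ large enough that $h^+(s) < 1/2$, say) to show $h^+(s)\leq p^s$ and absorb the geometric factor $1/(1-h^+(s))$ into the second-order term to get the target $p^s + p^{2s}$.

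The critical combinatorial step is the bound on $h^+(s)$. From the explicit formula,
\[
h^+(s) \;=\; p^{s+1}\,\sum_{i=0}^{\lfloor s/2\rfloor}\binom{s-i-1}{i}\!\left(\frac{1-p}{p}\right)^{\!i},
\]
and since $p>1/2$ the ratio $(1-p)/p$ is strictly less than $1$. Writing $F_n(x) := \sum_{i\geq 0}\binom{n-i}{i}x^i$, we have $h^+(s)=p^{s+1}F_{s-1}((1-p)/p)$, and $F_n$ satisfies the Fibonacci-like recurrence $F_n(x) = F_{n-1}(x) + x F_{n-2}(x)$. The target $h^+(s)\leq p^s$ amounts to $p\cdot F_{s-1}((1-p)/p)\leq 1$. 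I would approach this either by induction on $s$ with a refined hypothesis, or by computing the closed form via the roots of $t^2 = t + (1-p)/p$ and showing they combine to give the desired bound. The same recurrence $h^+(s) = p\,h^+(s-1) + p(1-p)\,h^+(s-2)$ follows from Pascal's identity applied directly to the binomial coefficients.

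The main obstacle is precisely the estimate $h^+(s)\leq p^s$. The dominant root $\alpha = \tfrac{1}{2}(1 + \sqrt{1+4(1-p)/p})$ satisfies $p\alpha<1$ but $p\alpha>p$, so a naive recursive estimate only yields $h^+(s)=O((p\alpha)^s)$, strictly weaker than the $p^s$ target. Obtaining the sharper bound requires exploiting the particular initial values $F_0=F_1=1$ together with the smallness of $(1-p)/p$, perhaps by a careful term-by-term comparison in the Fibonacci-like sum or by a refinement that accounts for the leading-order behaviour of the partial sums. Once this estimate is secured, the remaining steps (the reduction, the Taylor expansion, and collecting into $p^s$ plus $p^{2s}$) are routine manipulations.
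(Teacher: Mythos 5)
Your first two steps coincide with the paper's own proof of Claim~\ref{clm3}: drop $1-h^-(s)\le 1$ to reduce to $-\ln\bigl(1-h^+(s)\bigr)$, then Taylor-expand, so that everything hinges on the estimate $h^+(s)\le p^s$. The paper disposes of that last step by citing the bound $h^+(m+1)/h^+(m)<p$ from Appendix~\ref{hlowerbound}; you instead flag it as an open obstacle. You are right to be suspicious, but the situation is worse than ``hard to prove'': your own dominant-root computation already refutes the estimate. The closed form of $h^+$ places a strictly positive coefficient $\sqrt{p/(4-3p)}$ on the dominant root $p\alpha=\tfrac{1}{2}\bigl(p+\sqrt{p(4-3p)}\bigr)$, and you correctly observe that $p<p\alpha<1$ for all $p\in(1/2,1)$. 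Hence $h^+(s)\sim C(p)\,(p\alpha)^{s}$ and $h^+(s)/p^{s}\to\infty$; no exploitation of the initial values $F_0=F_1=1$ can overturn an exponential separation of growth rates. Concretely, the recurrence $h^+(s)=p\,h^+(s-1)+p(1-p)\,h^+(s-2)$ with $h^+(2)=h^+(3)=p^3$ gives $h^+(4)=p^4(2-p)>p^4$, and $h^+(s+1)/h^+(s)\to p\alpha>p$, which contradicts the inequality asserted in the very appendix whose closed form it follows from. So the proposal cannot be completed along this route, and the paper's one-line proof has the same hole.

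In fact the claim itself fails as stated: asymptotically $\ln\tfrac{1-h^-(s)}{1-h^+(s)}\ge h^+(s)-\tfrac{h^-(s)}{1-h^-(s)}\sim C(p)(p\alpha)^s\gg p^s+p^{2s}$, and already at $p=0.6$, $s=4$ a direct evaluation gives $\ln\tfrac{1-h^-(4)}{1-h^+(4)}\approx 0.158$ versus $p^4+p^8\approx 0.146$. The natural repair is to replace $p^s$ by $(p\alpha)^s$ throughout: the bound $h^+(s)\le(p\alpha)^{s}$ \emph{does} follow by induction from the recurrence above, since $p\alpha$ is its largest characteristic root and the initial values lie below the corresponding powers, and then your Taylor step yields $\ln\tfrac{1-h^-(s)}{1-h^+(s)}\le(p\alpha)^s+(p\alpha)^{2s}$ for $s$ large enough that $h^+(s)\le 1/2$. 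Because $p\alpha<1$, the downstream summability arguments (the divergence of $\sum_n 1/(1+x^n)$ in Appendix~\ref{subsec:wc_stop} and the definition of $\underbar{n}_s$ in Appendix~\ref{calns}) appear to survive with $p$ replaced by $p\alpha$, but the constants there, and the companion lower bound $h^+(m)\ge p^m$ used in Appendix~\ref{subsec:wc_stop}, must be re-derived consistently; the latter is true only for $m$ large.
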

\begin{proof}
First, $\ln\left(\dfrac{1-h^-(s)}{1-h^+(s)}\right)\leq \ln\left(\dfrac{1}{1-h^+(s)}\right)$.

Then, using the upper bound in Appendix \ref{hlowerbound} and Taylor's expansion, we know that $$\ln\left(\dfrac{1}{1-h^+(s)}\right) \leq -\ln(1-p^s) \leq p^s+p^{2s}.$$ 
\end{proof}	

\subsubsection{Calculation of Upper bound of $h^+(m)$}\label{hlowerbound}
First, in this question guidebook, $h^+(m)$ has a closed form given by
\begin{eqnarray}
h^+(m)=\left\{
\begin{array}{l l}	
\sqrt{\frac{p}{4-3p}}2^{-(m+1)}[(p+\sqrt{p(4-3p)})^{m+1}-(p-\sqrt{p(4-3p)})^{m+1}],& m \text{ is even} \\
\sqrt{\frac{p}{4-3p}}\Big(\dfrac{p}{2}\Big)^{\frac{m+1}{2}}\Big[\Big((2-p)+\sqrt{p(4-3p)}\Big)^{\frac{m+1}{2}}-\Big((2-p)-\sqrt{p(4-3p)}\Big)^{\frac{m+1}{2}}\Big],& m \text{ is odd} \nonumber
\end{array}\right.
\end{eqnarray}

With the closes form of $h^+(m)$, a simple bound of $\dfrac{h^+(m+1)}{h^+(m)}<p$ holds for all $m\in \mathbb{N}$.

For readers curious about the exact difference of $ph^+(m)-h^+(m+1)$, the form is \\
$p^{m+2} \Big[{}_2F_1(-\frac{m}{2}-\frac{1}{2},-\frac{m}{2},-m-1,4(1-\frac{1}{p}))-{}_2F_1(-\frac{m}{2}+\frac{1}{2},-\frac{m}{2},-m-1,4(1-\frac{1}{p}))\Big]$, where ${}_2F_1(\cdot)$ is a hypergeometric function. We will not need this detail as getting the bound $\dfrac{h^+(m+1)}{h^+(m)}<p$ is good enough for our results.

\subsubsection{Convergence and bounds of $\dfrac{h^+(m+1)}{h^-(m+1)}/\dfrac{h^+(m)}{h^-(m)}$.}\label{hh-converge}
For the simplicity, suppose $m$ is even.
Recall the closed form of $h^+(m)$ and $h^-(m)$ as follows:
\begin{eqnarray*}
&&h^+(m)=
\sqrt{\frac{p}{4-3p}}2^{-(m+1)}[(p+\sqrt{p(4-3p)})^{m+1}-(p-\sqrt{p(4-3p)})^{m+1}] \\
&&h^-(m)=
\sqrt{\frac{1-p}{1+3p}}2^{-(m+1)}[(1-p+\sqrt{1+2p-3p^2})^{m+1}-(1-p-\sqrt{1+2p-3p^2})^{m+1}] 
\end{eqnarray*}

Consider $\dfrac{h^+(m+2)}{h^-(m+2)}/\dfrac{h^+(m)}{h^-(m)}$.
\begin{eqnarray}
\dfrac{h^+(m+2)}{h^-(m+2)}\dfrac{h^-(m)}{h^+(m)} 
&=&\dfrac{(p+\sqrt{p(4-3p)})^{m+3}-(p-\sqrt{p(4-3p)})^{m+3}}{(1-p+\sqrt{1+2p-3p^2})^{m+3}-(1-p-\sqrt{1+2p-3p^2})^{m+3}}  \nonumber \\
&\times &
\dfrac{(1-p+\sqrt{1+2p-3p^2})^{m+1}-(1-p-\sqrt{1+2p-3p^2})^{m+1}}{(p+\sqrt{p(4-3p)})^{m+1}-(p-\sqrt{p(4-3p)})^{m+1}} \nonumber \\
&=& \dfrac{(p+\sqrt{p(4-3p)})^2}{(1-p+\sqrt{1+2p-3p^2})^2}
\dfrac{1-\frac{(p-\sqrt{p(4-3p)})^{m+3}}{(p+\sqrt{p(4-3p)})^{m+3}}}{1-\frac{(p-\sqrt{p(4-3p)})^{m+1}}{(p+\sqrt{p(4-3p)})^{m+1}}}
\dfrac{1-\frac{(1-p-\sqrt{1+2p-3p^2})^{m+1}}{(1-p+\sqrt{1+2p-3p^2})^{m+1}}}{1-\frac{(1-p-\sqrt{1+2p-3p^2})^{m+3}}{(1-p+\sqrt{1+2p-3p^2})^{m+3}}}\nonumber \\
&\leq& \dfrac{(p+\sqrt{p(4-3p)})^2}{(1-p+\sqrt{1+2p-3p^2})^2}\Bigg[1+\Big(\frac{p-\sqrt{p(4-3p)}}{p+\sqrt{p(4-3p)}}\Big)^{m+1}+\Big(\frac{p-\sqrt{p(4-3p)}}{p+\sqrt{p(4-3p)}}\Big)^{m+2}\nonumber \\
&&-\Big(\frac{1-p-\sqrt{1+2p-3p^2}}{1-p+\sqrt{1+2p-3p^2}}\Big)^{m+1}-\Big(\frac{1-p-\sqrt{1+2p-3p^2}}{1-p+\sqrt{1+2p-3p^2}}\Big)^{m+2} \Bigg] \nonumber 
\end{eqnarray}

Since $\dfrac{h^+(m+2)}{h^-(m+2)}\dfrac{h^-(m)}{h^+(m)}$ converges to $\dfrac{(p+\sqrt{p(4-3p)})^2}{(1-p+\sqrt{1+2p-3p^2})^2}$ exponentially fast, for large enough $m$, we can bound $\left|\dfrac{h^+(m+2)}{h^-(m+2)}\dfrac{h^-(m)}{h^+(m)} - \dfrac{(p+\sqrt{p(4-3p)})^2}{(1-p+\sqrt{1+2p-3p^2})^2}\right|<\epsilon$. 

Given the fact that $\dfrac{(p+\sqrt{p(4-3p)})^2}{(1-p+\sqrt{1+2p-3p^2})^2}>1$ for all $p>0.5$, there exists functions $c_1(p),c_2(p)>1$ such that $c_1(p)<\dfrac{h^+(m+2)}{h^-(m+2)}\dfrac{h^-(m)}{h^+(m)}$ for all $m$ greater than some natural number $\underbar{J}$ and $\dfrac{h^+(m+2)}{h^-(m+2)}\dfrac{h^-(m)}{h^+(m)}<c_2(p)$ for all $m$ greater than some natural number $\bar{J}$.

\subsubsection{Bounding the probability of stopping a right cascade} \label{boundrightcascade}

  \begin{align*}
  &\qquad \mathbb{P}(\text{a right cascade will be stopped})\\
  &\leq  1-A(p)\exp\Big(\ln(c_2(p))\sum_{n=\bar{J}}^{\infty}\dfrac{-h^-(n)}{\ln(1+\frac{h^+(n)-h^-(n)}{1-h^+(n)})}\Big) \\
    &\leq  1-A(p)\exp\Big(\ln(c_2(p))\sum_{n=\bar{J}}^{\infty}\dfrac{-h^-(n)}{\frac{h^+(n)-h^-(n)}{1-h^+(n)}(1-\frac{h^+(n)-h^-(n)}{1-h^+(n)})}\Big) \\
    &\leq  1-A(p)\exp\Big(\ln(c_2(p))\sum_{n=\bar{J}}^{\infty}\dfrac{-1}{
    (\frac{h^+(n)}{h^-(n)}-1)(1-\frac{h^+(n)-h^-(n)}{1-h^+(n)})\frac{1}{1-h^+(n)}}\Big) \\    
    &\leq  1-A(p)\exp\Big(\ln(c_2(p))\sum_{n=1}^{\infty}\dfrac{-1}{
    (\frac{h^+(n)}{h^-(n)}-1)(1-\frac{p^n-0}{1-p^n})\frac{1}{1-p^n}}\Big) \\   
    &\leq  1-A(p)\exp\Big(\ln(c_2(p))\sum_{n=1}^{\infty}\dfrac{-1}{
    (\frac{h^+(n)}{h^-(n)}-1)\frac{1-2p^n}{(1-p^n)^2}}\Big) \\  
    &\leq  1-A(p)\exp\Big(\ln(c_2(p))\sum_{n=1}^{\infty}\dfrac{-1}{
    (\frac{h^+(n)}{h^-(n)}\frac{1-2p^n}{(1-p^n)^2}-1}\Big) \\  
    &\leq  1-B(p)\exp\Big(\sum_{n=\bar{J}}^{\infty}\dfrac{1}{
    c_4(p)c_1(p)^n-1})\Big),
    \end{align*}
where $B(p)=c_2(p)e^{c_3(p)}\Pi_{k=1}^{\min\{j|m^{j+1}=\bar{J}\}} (1-h^-(k))$
\section{Appendix: Discussions}
\subsection{Question guidebooks are delicate}\label{sec:delicate}
    For the model described in Section \ref{sec:Problem} assume that every agent is only allowed to ask one binary question to the agent just in front of her. In other words, the capacity of the channel is exactly one bit. We will attempt to manually design the questions with the goal being asymptotic learning. Before we proceed, we remind the reader that in our model when indifferent (i.e., the posterior beliefs on the states of world are equally likely, or alternatively the likelihood ratio of state is $A$ versus the state is $B$ is $1$), an agent will always take the action suggested by her private signal.
    
   It is clear that along a sample-path, the only interesting parts are the questions asked after a cascade happens. Therefore, without loss of generality, we assume the first two agents get private signal $A$ and take action $\bar{A}$ to start a $\bar{A}$ cascade.
    
    Give the above assumption, it is clear that agent $3$ cannot stop the cascade as she exactly knows the private signal of agent 1 and 2 from their actions. Hence, agent 3 will take action $\bar{A}$ whichever private signal she receives. Now, agent 4 has a chance to stop the cascade if both she and agent 3 get $B$. Therefore, once she receives signal $B$, she should definitely ask agent 3 the question ``Did you get signal $B$?" Now we have another issue on designing the question book. What questions should agent 4 ask if she gets signal $A$. Since the above question is the most informative question that agent 4 can ask, we assume she will ask the same question even if she gets signal $A$.
    
    When the game goes to agent 5, there are two possible observations for her, either $\bar{A}\bar{A}\bar{A}\bar{B}$ or $\bar{A}\bar{A}\bar{A}\bar{A}$. If $\bar{A}\bar{A}\bar{A}\bar{B}$ is observed, the she definitely know that the underlying private signal is $AABB$ and the cascade has 
    already been stopped. Hence, she should take a decision according to her private signal. The interesting part comes when she observes $\bar{A}\bar{A}\bar{A}\bar{A}$. Now, prior to asking any questions, she knows there are at least three $A$ in the first four agent's private signal. Therefore, she should take action $\bar{A}$ whatever signal she receives. However, her question, even though it cannot benefit herself, may be critical to the following agents and could enable them to stop the cascade. Suppose she gets $\bar{B}$, and she can ask agent 4 the question ``Did you or agent 3 get signal $\bar{B}$?". First, we realize that agent 4 is able to answer this question. Once she get positive answer, she know there are exactly 3$\bar{A}$s and 2 $\bar{B}$s received by the first 5 agents. Therefore, once the agent 6 ask if there are two $\bar{B}$s in the first five agents, she can answer this question, and the positive answer can help the agent 6 to stop the cascade once she gets $\bar{B}$. Now, we have a clear example how a silent agent can actually help the following active agents to stop the cascade by asking questions in preparation of the (possible) questions asked by active agents. 
     
     Everything works well until agent 6. However, once agent 6 gets signal $A$, there exist multiple questions that are informative. For example, agent 6 can ask question 1: ``Did the first five agents get 2 $B$s?" or question 2: ``Did the first five agents get one or more $B$s?". Both questions can be answered by agent 5, but it is hard for us to know which one is the better question to help achieving asymptotic learning. Furthermore, agent 6 cannot distinguish the exact number of $B$ in the first 5 agents once she receives $B$ but gets a negative answer from agent 5. Therefore, when the cascade continuous to agent 7, we have to keep every branch of possible questions alive, and try to proceed on every branch until we finally understand that some branches cannot achieve asymptotic learning. Therefore, we can realize that question guidebooks are very delicate to design even we have already restrict our attention to a single Yes-No question. To conclude, even though we can proceed and design question guidebooks in this way, it is hard to analyze and generalize the result to non-binary states and channels with higher capacity. %This example is not to scare the readers how delicate the question guidebooks is;
The main value of this example is to indicate that it may not be the best approach to design the question guidebooks directly, and also to justify why we want to work with the corresponding Markov chains instead.  
     
\subsection{Why learning is more difficult in a deterministic network topology} \label{detvsranNW}

Unlike a deterministic network where the topology is the common knowledge of every agent, a common assumption is to use a randomized network topology where it is assumed that distributions of edges are common knowledge but the realized set of predecessors $\mathcal{B}_n=\{m|(m,n)\in E\}$ that agent $n$ can communicate is a private information of agent $n$ (otherwise higher order beliefs will be hard to analyze). Under this assumption, if an information cascade occurs and the information that is allowed to get from predecessors with communication channels is restricted to their private signals, it is analogous to the model in \cite{acemoglu2011bayesian} with neighbor set $N_n=\mathcal{B}_n \cup \{i\} \cup \{i+1\}$, where agent $i,i+1$ are the agents to start the cascade. Then, Theorem 4 in \cite{acemoglu2011bayesian} states the condition that asymptotic learning can be achieved in randomized networks. In contrast, things are not that simple with deterministic networks as there is an informational Braess's paradox, i.e., two information sources that both individually suggest agent $n$ to ignore her private signal and take action $\bar{A}$ may collectively lead agent $n$ take action $\bar{B}$ instead. An example will be provided below to illustrate this phenomenon, but the main message here is that problems in deterministic and randomized network topology are substantially different.

\subsubsection{Informational Braess's paradox in deterministic networks}
	Here, we want to present a counter-intuitive example showing that informational monotonicity may not exist in the deterministic network. In a nutshell, two (or more) sources of information both suggest an agent ignoring her private signal and taking an action $\bar{A}$ may eventually makes agent taking action $\bar{B}$.
	
     Consider the observable history is represented by network $\mathcal{G}$ contains two subnetwork $G_1(U,E_U)$ and $G_2(V,E_V)$ and two nodes $W_1, W_2$. Let $e^U_{i,j}$ represents the edge between $U_i$ and $U_j$, and similarly for $e^V_{i,j}$.
     
     Now, define the topology of the first subnetwork $G_1(U,E_U)$ such that $E_U=\{e^U_{1,3}, e^U_{2,3}, e^U_{3,j} \forall j\in \{[J]\setminus \{1,2,3\}\} \}$, where $J$ is a constant integer; and define the topology of the second subnetwork $G_2(V,E_V)$ such that $E_V=\{e^V_{1,3}, e^V_{2,3}, e^V_{1,k}, e^V_{3,k} \forall k\in \{[K]\setminus \{1\},\{2\},\{3\}\} \}$, where $K$ is a fixed integer.
     
     Then, the whole network $\mathcal{G}$ is defined as follows: $\mathcal{G}=(U \cup V \cup \{W_1,W_2\}, E_U \cup E_V \cup E_{W_2})$, where $E_{W_2}=\{(i,W_2)| i\in U \cup V\cup W_1 \setminus \{U_2,V_1,V_2\}\}$. In short, $W_2$ can observe all agents action except agent $U_2$, $V_1$, and $V_2$. We assume that agents make actions sequentially, and agent $X_i$ takes action before agent $X_j$ for all $i<j$, $X\in \{U,V,W\}$ and agent $X_i$ takes action before agent $W_2$ for all $X\in \{U,V\}$. The directed graph is depicted in Figure~\ref{fig1:BP}.     
         \begin{figure}[H] \label{fig2}
    \centering
	\includegraphics[width=0.65\linewidth]{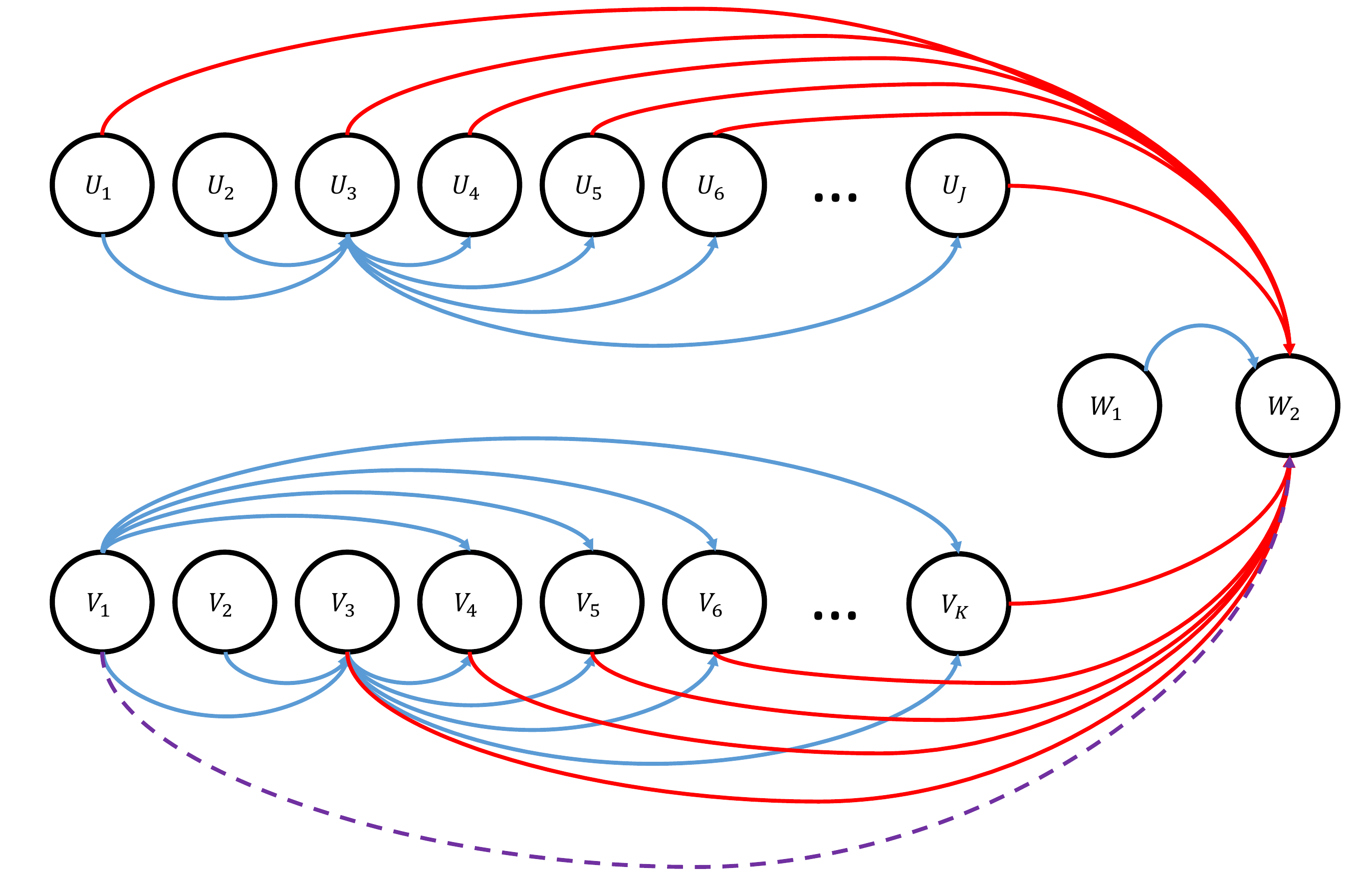}
    \caption{Baress's paradox in deterministic network}
    \label{fig1:BP}
    \end{figure} \vspace{-12pt} 
	In this topology, consider a realized observation of $W_2$ such that actions taken by agents in $U_j$, $X_{U_j}=\bar{A}$ for all $j\geq 3$ and $j=1$, and actions taken by agents in $V_k$, $X_{V_k}=\bar{B}$ for all $k\geq 3$. By calculating the likelihood ratio at $W_2$, we know that $W_2$ should ignore her private signal and take action $\bar{A}$ the observation of $X_{W_1}=\bar{A}$.
     
     At this point, we know that if agent $W_2$ observes the  history realized in the above case, there is a information cascade on $\bar{A}$; and we say this is the first source of information that agent $W_1$ can get from.
     
     Now, if on top of these observations, agent $W_2$ now can observe the action taken by agent $V_1$ (the purple dash line in the figure) and $X_{V_1}=\bar{A}$, which solely suggests agent $W_2$ to take action $\bar{A}$. Surprisingly, if we calculate the likelihood ratio of $W_2$ using the observation of both source of information, agent $W_2$, instead of being in an information cascade and taking action $\bar{A}$, will ignore her private signal and taking action $\bar{B}$ when $K\geq 5$. In this case, although both sources of information suggest an information cascade, by considering the feasible sequence of the private signal using the both source of information, the agent can choose to initiate an information cascade opposite to the suggested one.  
         
     Instead of presenting the detailed calculations of the likelihood ratio, we want to point out the high-level idea why such an informational Braess's paradox can happen in deterministic networks.
     
	In deterministic networks, agents know the whole network topology, and can build high-order belief conditional on the past agent's topology to get some information about some predecessors' action they cannot observe directly. In the example shown above, before observing the action taken by $V_1$, agent $W_2$ doesn't know if the action $X_{V_k}=\bar{B}$ she observes is taken by agent $V_k$'s private signal or $V_k$ is in an information cascade of $\bar{B}$. Therefore, she need to take both scenarios into consideration to update her posterior belief if $X_{V_1}$ (the purple line in the figure) is not observable. However, once she know that $X_{V_1}=\bar{A}$, she knows that all $X_{V_k},~k\geq 3$ are taken by their private signal. When $K$ is large enough, agent $W_2$ will drop her private signal and initiate an information cascade $\bar{B}$.

\subsection{Extended Question Guidebooks} \label{extQGB}
	Here, we want to point out an extension of the framework of our question guidebook. As mentioned in Section \ref{sec:Problem}, we are only allowed to design question guidebooks suggesting agents asking questions simultaneously. However, assume $m\in \mathcal{B}_n$, questions being asked by agent $n$ to agent $m$ can be allowed to be dependent not only on the private signal and history observed by agent $n$, but also on the answers of questions asked to other agents in $\mathcal{B}_n$ prior to asking agent $m$. In short, the order of agents in $\mathcal{B}_n$ queried matters in the general framework. The framework of question guidebook, in general, should be able to help agents update their higher-order belief in the process of asking questions, and not just a one-shot update after all responses of questions are received. In the extended framework, a particular collection of sequence of set of questions the information designer provides to agents will be collectively called a \textit{extended question guidebook}. The systematic analysis of extended question guidebooks which includes the freedom of query order on predecessors in $\mathcal{B}_n$, is generally hard because of the combinatorial complexity of considering all the possible order of questions, and because the recursive analysis of the higher order beliefs on each specific order of questions has to be accounted for. To give a glimpse of how an extended question guidebooks can help achieving asymptotic learning efficiently by asking questions to agents with right order, a simple example is provided. 
   
\subsubsection{Example that the order of questions matters}  \label{warm-up}
	Before characterizing question guidebooks, we provide a simple example demonstrating question guidebooks in the general extended framework, where a question guidebook is a collection of a sequence of an ordered set of questions. %For readers rushing to know main results, results in this subsection is independent to the Section \ref{sec:Markov}, \ref{sec:DetQB} so they can be skipped with no bite.
	
	We start with the most unrestricted network, all prior agents are contactable, so every agent is free to ask questions to any agents with lower index than her (if any), and there are no capacity constraints on these communications. Obviously, every agent can get perfect information by asking enough questions to prior agents so that asymptotic learning can be achieved. In the most naive case, every agent just asks all past agents their private signal. This will require $\Theta(n)$ bits of information per agent.
	 
	 Since agents are homogeneous, the learning can be achieved with less communication through the ``backward level tracking" scheme proposed below. 
	First of all, it an agent is not in a cascade, she take actions according to her private signal.	 
	For agents in an cascade, suppose they are in $A$ cascade, there are three different cases detailed below:
	If she gets $A$, then she plays $\bar{A}$ without asking any questions.
	If she gets $B$, then she asks her immediate predecessor's private signal. If her predecessor's private signal is $A$, then she can safely play $A$. The reason is that she knows that there must be an agent in the history has the same number of $\#A-\#B$ and that person plays $A$. Hereafter, we say $\#A-\#B$ be the \textit{level} of agents.
	If she gets $B$ and her immediate predecessor also got $B$, she can safely play $B$ only if she knows the current level is over $2$. She can ask learn the level by asking agents $n-1, n-2, \ldots$ about their private signal. Assuming that agent plays correctly, he would then know the level of that agent, and from this could compute his own level. 
		
	 The remaining question is how long an agent will need to look back in order to find either the beginning, an agent that plays a different action, or an index agent.  In the long run, if the correct state of the world is $B$ all agents will eventually play $B$.  Thus $B$ agents will simply play $B$ (because the prior agent did).  An $A$ agent will need to look back until he finds an index agent.  However, this is essentially a (downward) biased random walk; and so the expected number of steps until locating an index agent is constant.  However, it is only constant in the average case.  If the graph of the level goes down for $\log(n)$ steps (as is probable at some point), then agent $n$ will have to ask at least $\log n$ prior agents to find an index agent.  This scheme will not work for agents who only ask a fixed number of queries.  Our surprising result is that, actually, agents can manage with only asking just the immediate predecessor a fixed number of queries, namely one.
	 
	 \subsection{A subtle weaker assumption on common knowledge of question guidebook} \label{commknown}
	 At the end of the Section \ref{sec:Problem}, we assume that the whole question guidebook is the common knowledge. However, it is no loss to use a slightly weaker assumption that only \textbf{\textit{the feasibility and the compatibility of the proposed question guidebook}} is common knowledge. To see the difference between this two assumptions, we have to consider the set of question guidebooks containing index-dependent questions.
	 
	 Now, consider we have two different question guidebooks, $Q_1$, $Q_2$ both achieves asymptotic learning. These two questions guidebooks, concerning the usefulness of questions, become operational only when a cascade is initiated. Now, consider a joint question guidebook $Q^*$ which uses questions in $Q_1$ when a cascade is initiated by the agent with odd index, and use questions in $Q_2$ when cascade is initiated by even-indexed agent. (This question guidebook can still achieve asymptotic learning, but it's not the main point.) Applying the original assumption, we need to disclose the whole question guidebook to all agents, but it is straightforward that a half of the question guidebook, after an agent observes the history and knows that the cascade was initiated by an odd-indexed agent or an even-indexed one, is useless for her on updating her posterior beliefs. By restricting to the assumption which commits to the feasibility and incentive compatibility of the question guidebook, all agents' best response are to ask the questions suggested by the question guidebook, and this is enough to make our scheme stand.

\subsection{Complementary discussion on mapping question guidebook to Markov chains} \label{QGBtoMC}
%Recall that

To formulate the mapping, we need to know how many of states are required in these Markov chains first. 
   
   Since each channel $e$ has finite capacity, denoted by $c(e)$ and $|B(n)|<\infty$, the answers of questions asked by agent $n$ conditioned on the observed history and her private signal can partition the information space to $q_n(H_n,s_n)$ disjoint information sets, where $q_n(H_n,s_n) \leq \Pi_{e\in B(n)} 2^{c(e)}$. Besides, the minimum number of states sufficient to represent $\mathcal{I}_n(Q,H_n)$ in agent $n$ also depends on the number of information sets in the agent $n-1$, $|\mathcal{I}_{n-1}(Q,H_{n-1})|$. Thus, the upper bound of the minimum number of states sufficient to represent $\mathcal{I}_n(Q,H_n)$ can be written as a recursive form:
       \begin{equation} \label{eqn:1} \vspace{-6pt}
         |\mathcal{I}_n(Q,H_n)|=\min \{|\mathcal{I}_{n-1}(Q,H_{n-1})|, q_n(H_{n-1},A)\}+\min \{|\mathcal{I}_{n-1}(Q,H_{n-1})|, q_n(H_{n-1},B)\} 
   \end{equation}
	
     With the knowledge of the number of required states in Markov chains, next we associate a question guidebook with a sequence of sets of Markov chains sharing the same state space. Denote $(G,(\mathcal{M}_n))$ to be a sequence of sets of Markov chains, where $G$ is the set of states and $\mathcal{M}_n$ is the set of Markov chains at time $n$. Since we are allowed to ask different questions for different observed histories, we can have different Markov chains for different feasible histories according to the current question guidebook, hence $|\mathcal{M}_t|=|\{H_n|P(H_n|Q)>0\}|$.
     With \eqref{eqn:1}, we know that that $|G|=\sup_{n,H_n\in \mathcal{H},s_n} |\mathcal{I}_n(Q,H_n)|$ suffices for our question guidebooks. With such a $G$,  questions asked by agent $n$ conditioned on an observed history $H_n$ and can be written as a unique Markov chain transition matrix. However, for the simplicity of identifying the case that active players actually stop the cascade, an extra state, $G_0$ is added to capture events when this cascade is stopped. For every silent agent, $G_0$ is an isolated state which has no (directed) edges to/from other states. For active agents, once a cascade is stopped, the state transitions to stage $G_0$ irrespective of the previous state. With the new $G^*=G\cup G_0$, once we specify the series of sets of Markov transition matrices that each agent uses for information set partition, the question guidebook is uniquely determined.

\section{Appendix: Related work} \label{App:relatedwork}

Social learning, or so called Bayesian observation learning, studies whether and how consensus (unanimous actions) can be reached among sequential Bayes-rational decision makers under incomplete information. Key result shown in \cite{BHW1992,welch1992,banerjee1992} says that with homogeneous and Bayes-rational agents receiving binary private signals, an \textbf{\textit{Information cascade}}, all but a few of the first agents will cascade to the less profitable action, happens almost surely. Once an information cascade occurs, no future private signals are revealed. Smith and S{\o}rensen \cite{smith2000} significantly generalized\footnote{The work in \cite{sorensen1996rational,smith2000} also developed the generalization to arbitrary but finite states of the world and a finite set of actions.} the model to allow for richer signals characterized by the likelihood ratio of the two states of the world deduced from the private signals, and heterogeneous agent types\footnote{Here a new phenomenon called ``confounded learning" is demonstrated where in the long run agents of different types will herd on the same action and from the actions it will be impossible to detect the types.}. Using both martingale techniques and Markovian analysis, they proved that with unbounded likelihood ratios in the private signals, any cascade can be stopped and learning toward the correct action can be achieved; the speed of learning in this setting is characterized in \cite{hann2018speed,acemoglu2017fast}. 

These initial works lead to considerable follow-up work towards understanding information cascades better, and towards achieving (asymptotic) learning. The vast majority of work here studies how modifying the \textit{information structure} of the problem impacts cascades or allows for learning. As these are closest to our work, we will discuss these in detail to highlight our contributions and the differences. Aside from those works in the above field, there is a vast literature \cite{cover1969hypothesis,hellman1970learning,zhang2013hypothesis,wang2015social,le2017learning}
that consider non-Bayesian agents including bounded rational players, irrational players, and algorithmic agents, and alternate history update rules as a means of achieving learning. A majority of this set of literature studies the (optimal) decision rules in decentralized (binary) hypothesis testing problem on a variety of network models \cite{TayTsiWin:J09a, drakopoulos2016network,tay2008subexponential}. Since the Bayes-rationality constraint differentiates our work from the work on decentralized hypothesis testing problems, we will only discuss several seminal works and highlight one work \cite{drakopoulos2012learning}, in particular, that achievedsasymptotic learning with a specific set of four-state Markov chains. The approach in \cite{drakopoulos2012learning}, from the perspective of information design, is similar in spirit to partitioning information state to information sets, but for non-Bayesian agents. Last, there is also literature that allows for heterogeneous types of agents or changes the \textit{actions and the payoff structure}, and studies the impact of these on social learning. In the literature with heterogeneous agent types\cite{wu2015helpful,vaccari2016social,le2017learning}, disagreement between agents typically leads to an information cascade but the presence of poorly informed agents surprisingly reduces the probability of a wrong cascade. Papers \cite{lee1993convergence,gul1995endogenous,vives1997learning,huck1998informational}  
that modify the actions, typically consider continuum action spaces that result in learning. We will not discuss the last class of literature in detail but include the references for completeness.

As every agent is endowed with a (conditionally) independent and informative private signal about the state of the world, learning would result if these signals are communicated and collected frequently enough and in an accurate fashion. The emergence of information cascades (and herding) shows that the information assimilation part is faulty. Changing the underlying information structure either by revealing only part of the information in the database, by modifying the information in the common database or by adding new channels of information have then been the approaches that have been taken. In \cite{acemoglu2016informational}, only a (random) subset of the past agents' actions are revealed to the current agent. This feature allows some agents to take actions solely based\footnote{By Bayes-rationality, this is equivalent to the revelation of their private signal.} on their private signal, and the paper characterizes the properties of these subsets (called networks) such that learning results. A key result is that even with private signals having a bounded likelihood ratio, there exist networks such that learning occurs. The authors in \cite{mossel2015strategic} show that a special class of simple networks also leads to learning, where agents only see the actions of at most $d$ past agents, and where any agent's action is only observed by agents at most $L$ indices ahead. Recognizing that displaying no past action history would also result in Bayes-rational agents revealing their private signals, \cite{peres2017fragile} %\textcolor{red}{(Add Peres et al.)} 
determined the minimum sequence of revelation agents that are needed for learning: each agent $n$ reveals independently with probability $p_n$ where we need $p_n\propto 1/n$. In \cite{banerjee2004word,smith2008rational} %\textcolor{red}{(Add [30] and [55] from Tho's thesis)}
, the ordering information of the subset of agents whose actions are revealed, is omitted but nevertheless learning results. Examples of imperfect observation of history that do not lead to complete learning often feature assumptions such as deterministic sub-sampling of past agents' actions (sometimes only by a finite number of agents), unknown observation order and aggregating observations, such as in \cite{sgroi2002optimizing,celen2004observational,callander2009wisdom,ho2014robust,song2015social,bohren2017bounded,chamley1994information,zhang2009robust,sgroi2003irreversible}.

In \cite{le2014value,le2014impact,monzon2017aggregate}, imperfections are added when the information is stored in the database. Whereas this does not result in learning, discontinuous and non-monotonic behavior in the amount of imperfection added is shown; this adds to the literature on information Braess paradoxes observed with equilibrium behavior. Along the same lines, by allowing for stochastic arrivals, in \cite{ISIT2018}, uncertainty in whether an agent arrived and didn't purchase or no agent arrived, results in discontinuous and non-monotonic behavior of the wrong cascade probability in the uncertainty, even though learning doesn't result. Finally, the impact of additional information via reviews of the item obtained when purchases are made, is studied in \cite{le2016quantifying,le2017information,acemoglu2017fast}, with the main conclusions that learning requires unbounded likelihood ratios of the private signals (and not the reviews) and information Braess paradoxes result in extremely non-intuitive behaviors. 

By relaxing the assumption that agents are Bayes-rational, learning, also called optimal decision rule, has been studied in distributed hypothesis testing problems with long history. A seminal work\cite{cover1969hypothesis} studied distributed binary hypothesis testing problem and achieve learning (limit probability of error zero in their language) with agents using a four-valued statistics-based algorithm. A following work by Hellman and Cover \cite{hellman1970learning} showed that asymptotic learning is not achievable under bounded likelihood ratio on signals. General results, without a specified class of network topology, of distributed hypothesis testing were summarized/presented in \cite{tsitsiklis1989decentralized}. Learning in tree networks with bounded depth/degree with algorithmic agents were respectively studied in \cite{TayTsiWin:J09a, drakopoulos2016network}. Restricting attentions to a line-structure network, named tandem network, although there is no learning when the signal distributions are unknown a priori \cite{ho2014robust}, learning can be achieved at the sub-exponential rate for unbounded likelihood ratio in \cite{tay2008subexponential}. In order to study what additional information is required to achieve learning under bounded likelihood ratio, authors in \cite{drakopoulos2012learning} allowed agents to see $K\geq 2$ immediate predecessors actions, instead of just one. It is shown in \cite{drakopoulos2012learning} that asymptotic learning can be achieved using a specific set of four-state Markov chains, with $K=2$. From the perspective of information design, this approach of designing Markov chains for learning is similar in spirit to a partition of information sets, but for non-Bayesian agents.

	Although designing Markov chains for learning, a prototype of information set partition, is provided in \cite{drakopoulos2012learning}, conditions to achieve learning in models with Bayesian agents are usually more harsh than with non-Bayesian agents. A well-known example is that even though neither Bayesian or Non-Bayesian model cannot achieve asymptotic learning in bounded-likelihood ratio signal with binary actions \cite{cover1969hypothesis}, learning under bounded-likelihood ratio signal can be achieved in ternary actions \cite{koplowitz1975necessary} among Non-Bayesian agents, but not among Bayesian agents \cite{dia2009decision}. The Bayesian-agent assumption restricts some particular (realized by nature) agents' action space and hence blocks the information aggregation. To study how to design information structure shown to Bayes-rational agents, there is an emerging field called ``Information design" following a seminal paper ``Bayesian Persuasion" by Kamenica and Gentzkow in \cite{kamenica2011bayesian}. With the commitment power on the information structure (without changing the prior beliefs of agents), designed randomized signal can partition the information space and change agents' posterior beliefs to change agents' preferred actions with some probability. However, the nature of our problem, the finite communication capacity, restricts the strength (richness) of the additional signal available, hence that changing every agents action with additional signals is positive probability in order to stop cascades, is not possible. To overcome this and achieve learning, only a particular set of agents, called active agents in Definition \ref{def:agenttype}, has the positive probability to stop the cascade. The purpose of the rest of agents, called silent agents in Definition \ref{def:agenttype}, is to relay the information to make active agents to be ``persuadable" (have a positive probability to stop the cascade after updating their posterior likelihood ratio). From this perspective, the approach presented in this paper can be viewed as ``relayed Bayesian persuasion". Essentially, a long series of silent agents convince the next active agent that she is in a wrong cascade by making the likelihood ratio become very large, effectively unbounded.

\end{document}